\newtheorem{Reduction Rule}{Reduction Rule}
\newcommand{\sstpath}{{shortest $s$-$t$ path}\xspace}
\newcommand{\stpath}{{$s$-$t$ path}\xspace}
\newcommand{\stgraph}{{$s$-$t$ graph}\xspace}
\newcommand{\sstpaths}{{shortest $s$-$t$ paths}\xspace}
\newcommand{\stpaths}{{$s$-$t$ paths}\xspace}
\newcommand{\tsc}{{\sf tracking set condition}\xspace}
\newcommand{\tp}{{\sc Tracking Paths}\xspace}
\newcommand{\tpe}{{\sc Tracking Paths using Edges}\xspace}
\newcommand{\tsp}{{\sc Tracking Shortest Paths}\xspace}
\newcommand{\NP}{\text{\normalfont  NP}\xspace}
\newcommand{\APX}{{\sc APX}\xspace}
\newcommand{\Oh}{\mathcal{O}}
\newcommand{\defproblem}[3]{
  \vspace{1mm}
\noindent\fbox{
  \begin{minipage}{0.96\textwidth}
  \begin{tabular*}{\textwidth}{@{\extracolsep{\fill}}lr} #1 \\ \end{tabular*}
  {\bf{Input:}} #2  \\
  {\bf{Question:}} #3
  \end{minipage}
  }
  \vspace{1mm}
}
\begin{document}
\title{Polynomial Time Algorithms for Tracking Path Problems}
%
%

%
%

\author{Pratibha Choudhary}

\institute{Indian Institute of Technology Jodhpur, Jodhpur, India.\footnote{This work was done while the author was visiting The Institute of Mathematical Sciences, Chennai, India.}\\
\email{pratibhac247@gmail.com}
}

\authorrunning{P. Choudhary}

%
%
\maketitle              

\begin{abstract}
Given a graph $G$, and terminal vertices $s$ and $t$, the \tp problem asks to compute a minimum number of vertices to be marked as trackers, such that the sequence of trackers encountered in each \stpath is unique. \tp is \textsc{NP}-hard in both directed and undirected graphs in general. In this paper we give a collection of polynomial time algorithms for some restricted versions of \tp. 
We prove that \tp is polynomial time solvable for chordal graphs and tournament graphs. We prove that \tp is \textsc{NP}-hard in graphs with bounded maximum degree $\delta\geq 6$, and give a $2(\delta+1)$-approximate algorithm for the same. We also analyze the version of tracking \stpaths where paths are tracked using edges instead of vertices, and we give a polynomial time algorithm for the same. Finally, we show how to reconstruct an \stpath, given a sequence of trackers and a tracking set for the graph in consideration. 

\keywords{Graphs \and Paths \and Chordal Graphs \and Tournaments \and Approximation \and Bounded degree graphs \and Tracking Paths.}
\end{abstract}

\section{Introduction}
\label{sec:intro}

Tracking moving objects in networks has been studied extensively due to applications in surveillance and monitoring. Specific cases include secure system surveillance, habitat monitoring, vehicle tracking, and other similar scenarios. Object tracking in networks also finds applications in analyzing disease spreading patterns, information dissemination patterns on social media, and data packet flow in large networks like the world wide web. Tracking has been largely studied in the fields of machine learning, artificial intelligence, networking systems among other fields.

The problem of tracking paths in a network was first graphically modeled by Banik et al. in~\cite{ciac17}.
Let $G=(V,E)$ be an undirected graph without any self loops or parallel edges and suppose that $G$ has a unique entry vertex (source) $s$ and a unique exit vertex (destination) $t$. A simple path from $s$ to $t$ is called an \stpath.
The problem requires finding a set of vertices $T\subseteq V$, such that for any two distinct \stpaths, say $P_1$ and $P_2$, in $G$, the sequence of vertices in $T\cap V(P_1)$ as encountered in $P_1$ is different from the sequence of vertices in $T\cap V(P_2)$ as encountered in $P_2$. Here $T$ is called a \textit{tracking set} for the graph $G$, and the vertices in $T$ are referred to as \textit{trackers}.
Banik et al.~\cite{ciac17} proved that the problem of finding a minimum-cardinality tracking set to track {\em shortest} \stpaths (\tsp problem) is \NP-hard and \APX-hard. Later, the problem of tracking all \stpaths (\tp) in an undirected graph was studied in~\cite{tr-j},\cite{quadratic},\cite{ep-planar}. \tp is formally defined as follows.

\defproblem{\tp $(G,s,t)$}{An undirected  graph $G=(V,E)$ with terminal vertices $s$ and $t$.}
{Find a minimum cardinality tracking set  $T$ for $G$.}
\medskip

\tp was proven to be \textsc{NP}-complete in~\cite{tr-j}. Here, the authors studied the parameterized version of \tp, which asks if there exists a tracking set of size at most $k$, and showed it to be fixed-parameter tractable by giving a polynomial kernel. Specifically, it was proven that an instance of \tp can be reduced to an equivalent instance of size $\Oh(k^7)$ in polynomial time, where $k$ is the desired size of the tracking set. In~\cite{quadratic}, the authors improved this kernel to $\Oh(k^2)$, and gave a $\Oh(k)$ kernel for planar graphs. In~\cite{ep-planar}, Eppstein et al. proved that \tp is \textsc{NP}-complete for planar graphs and gave a $4$-approximation algorithm  for this setting. Here, the authors also proved that \tp can be solved in linear time for graphs of bounded clique width, when the clique decomposition is given in advance. 

\tsp was also studied in~\cite{caldam18} and \cite{guido-cubic}. In~\cite{caldam18}, Banik et al. studied \tsp and proved the problem to be fixed-parameter tractable. In~\cite{guido-cubic}, Bil{\`o} et al. prove that \tsp is \textsc{NP}-hard for cubic planar graphs in case of multiple source-destination pairs, and give an \textsc{FPT} algorithm parameterized by the number of vertices equidistant from the source $s$. 

In this paper we study \tp for chordal graphs, tournament graphs, and degree bounded graphs. A \textit{Chordal} graph is a graph in which each cycle of length greater than three has a chord (an edge between non-adjacent vertices of the cycle). A \textit{tournament} is a directed graph in which there exists a directed edge between each pair of vertices. So far all the work done on \tp has been focused on tracking \stpaths (or \sstpaths) using vertices. In this paper, we also study tracking \stpaths using edges. A natural question that follows is that of path reconstruction, which has already been studied for shortest paths~\cite{ciac17}. Here we give an algorithm for path reconstruction when all \stpaths are considered. 
Chordal graphs find applications in computational biology, computer vision and artificial intelligence~\cite{chordal-combio}, \cite{chordal-expert}, \cite{chordal-1}, \cite{chordal-2}. Tournament graphs are used in voting theory and social choice theory to graphically depict pairwise relationships between entities in a community~\cite{voting}, \cite{voting2}. Tournament graphs are particularly used to study the Condorcet voting model, where a preference is indicated between each pair of contestants~\cite{fisher}.

\noindent 
{\bf Our Results and Methods.} In this paper we give some polynomial time results for some variants of the \tp problem. We prove that \tp is polynomial time solvable for chordal graphs and tournaments. From~\cite{tr-j}, it is known that each cycle in the input graph needs a tracker. The key idea in proofs for chordal and tournament graphs is that if two \stpaths differ in only one vertex, than that vertex necessarily needs to be marked as a tracker. Next we prove that \tp is \textsc{NP}-hard for graphs with maximum degree $\delta$ ($\delta\geq 6$). We also give a $2(\delta+1)$-approximation algorithm for graphs with maximum degree $\delta$. Here the idea is to ensure that sufficient vertices are marked as trackers in each cycle. This derives from the fact that each cycle in a graph necessarily needs a tracker~\cite{tr-j}. In order to give a complete solution for tracking paths in a graph, we also give an algorithm that reconstructs the required \stpath given a sequence of trackers and a tracking set for the input graph. This uses the fact that by the definition of a tracking set, each maximal sequence of trackers in a tracking set should correspond to at most one \stpath in a graph. The reconstruction algorithm uses the disjoint path algorithms for undirected graphs~\cite{KAWARABAYASHI2012424} and tournaments~\cite{chudnovsky} to construct the required \stpath.

Towards the end of the paper we analyze the problem of tracking \stpaths in an undirected graph using edges rather than vertices. We prove that even while using edges, each cycle in the graph needs at least one edge to be marked as a tracker. Further, a minimum feedback edge set (set of edges whose removal makes a graph acyclic) is also a minimum tracking edge set.


\section{Definitions and Notations}
\label{sec:prelim}

Throughout the paper, while analyzing tracking paths using vertices in a graph, we assume graphs to be simple i.e. there are no self loops and multi-edges.
When considering tracking set for a graph $G=(V,E)$, we assume that the given graph is an \stgraph, i.e. the graph contains a unique source $s\in V$ and a unique destination $t\in V$ (both $s$ and $t$ are known), and we aim to find a tracking set that can distinguish between all simple paths between $s$ and $t$. Here $s$ and $t$ are also referred as the terminal vertices. In this paper, when we refer to tracking set, we mean tracking set for all \stpaths. If $a,b\in V$, then unless otherwise stated, $\{a,b\}$ represents the set of vertices $a$ and $b$, and $(a,b)$ represents an edge between $a$ and $b$. For a vertex $v\in V$, \textit{neighborhood} of $v$ is denoted by $N(v)=\{x \mid (x,v)\in E\}$. We use $deg(v)=|N(v)|$ to denote degree of vertex $v$. 
For a vertex $v\in V$ and a subgraph $G'$, $N_{G'}(v)=N(v)\cap V(G')$. For a subset of vertices $V'\subseteq V$ we use $N(V')$ to denote $\bigcup_{v\in V'} N(v)$. With slight abuse of notation we use $N(G')$ to denote $N(V(G'))$. For a graph $G$ and a set of vertices $S\subseteq V(G)$, $G-S$ denotes the subgraph induced by the vertex set $V(G)\setminus V(S)$. If $S$ is a singleton, we may use $G-x$ to denote $G-S$, where $S=\{x\}$. $[m]$ is used to denote the set of integers $\{1,\dots,m\}$. A \textit{chord} in a cycle is an edge between two vertices of the cycle, such that the edge itself not part of the cycle. In a directed cycle, a \textit{monotone} cycle is a cycle $C$ in which there exists a pair of vertices $a,b$ such that there exists two directed paths in $C$ from the vertex $a$ to the vertex $b$.

In a graph $G$, a feedback vertex set (FVS) is a set of vertices whose removal makes the graph acyclic. In an undirected graph $G$, feedback edge set (FES) is the set of edges whose removal makes the graph acyclic. An edge weighted undirected graph is an undirected graph with weights (real number values) assigned to each of its edges.
For a path $P$, $V(P)$ denotes the vertex set of path $P$ and $E(P)$ denotes the edge set of path $P$. For a subgraph (or graph) $G'$, $V(G')$ denotes the vertex set of $G'$, and $E(G')$ denotes the edge set of $G'$. Let $P_1$ be a path between vertices $a$ and $b$, and $P_2$ be a path between vertices $b$ and $c$, such that $V(P_1)\cap V(P_2)=\{b\}$. By $P_1 \cdot P_2$, we denote the path between $a$ and $c$, formed by concatenating paths $P_1$ and $P_2$ at $b$. Two paths $P_1$ and $P_2$ are said to be \textit{vertex disjoint} if their vertex sets do not intersect except possibly at the end points, i.e. $V(P_1)\cap V(P_2) \subseteq \{a,b\}$, where $a$ and $b$ are the starting and end points of the paths. By distance we mean length of the shortest path, i.e. the number of edges in that path. For a sequence of vertices $\pi$, by $V(\pi)$ we mean the set of vertices in the sequence $\pi$.
For a graph $G=(V,E)$, an FVS is a set of vertices $S\subseteq V$ such that $G\setminus S$ is a forest. If there exists a path $P$ such that $(a,b)$ is an edge that lies at one end point of $P$, then $P-(a,b)$ denotes the subpath of $P$ obtained after removing the edge $(a,b)$. Graphs which have maximum degree of vertices as three are known as \textit{cubic} graphs. By a \textit{bounded degree graph}, we mean a graph whose vertices have a maximum degree of $d$, where $d$ is some constant. 

\section{Preliminaries}

In this section, we give some basic claims which are necessary for the proofs of results in subsequent sections. Here the focus is on analysis of \tp in undirected graphs.
We start by first recalling a reduction rule from~\cite{tr-j} that ensures that each vertex and edge in the input graph participates in an \stpath.

\begin{Reduction Rule}~\cite{tr-j}
\label{red:stpath-undirected}
In graph $G$, if there exists a vertex or an edge that does not participate in any \stpath in $G$, then delete it.
\end{Reduction Rule}

It is known that Reduction Rule is safe and can be applied in quadratic time~\cite{tr-j} on undirected graphs. In rest of the paper, by \textit{reduced graph}, we mean a graph that is preprocessed using Reduction Rule~\ref{red:stpath-undirected}.
Next we recall the following lemma from~\cite{tr-j}, which is used to define some commonly used terms in this paper.

\begin{lemma}
\label{lemma:local-s-t}
In a reduced graph $G$, any induced subgraph $G'$ consisting of at least one edge, contains of a pair of vertices $u,v\in V(G')$ such that, \textsf{(a)} there exists a path in $G$ from $s$ to $u$, say $P_{su}$, and another path from $v$ to $t$, say $P_{vt}$, \textsf{(b)} $V(P_{su})\cap V(P_{vt})=\emptyset$, \textsf{(c)} $V(P_{su})\cap V(G')=\{u\}$ and $V(P_{vt})\cap V(G')=\{v\}$.
\end{lemma}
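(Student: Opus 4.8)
The plan is to exploit Reduction Rule~\ref{red:stpath-undirected}: in a reduced graph every edge lies on some \stpath, and I will realize the two routing paths as disjoint prefix and suffix of a single such \stpath, using an \emph{edge} of $G'$ to guarantee that the prefix and suffix do not collide.

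First I would pick an edge. Since $G'$ is an induced subgraph containing at least one edge, there exist two distinct vertices $a,b\in V(G')$ with $(a,b)\in E(G')\subseteq E(G)$. Because $G$ is reduced, this edge participates in some \stpath $P$; I fix such a $P$ and orient it from $s$ to $t$. Both $a$ and $b$ then lie on $P$ and belong to $V(G')$, so $P$ meets $V(G')$ in at least two distinct vertices.

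Next I would define $u$ to be the first vertex of $P$ (reading from $s$) that lies in $V(G')$, and $v$ to be the last such vertex. Let $P_{su}$ be the prefix of $P$ from $s$ to $u$ and $P_{vt}$ the suffix of $P$ from $v$ to $t$; as contiguous subpaths of the simple path $P$ these are themselves paths in $G$, giving~\textsf{(a)}. By the choice of $u$ as the first $G'$-vertex along $P$, no vertex of $P_{su}$ other than $u$ lies in $V(G')$, so $V(P_{su})\cap V(G')=\{u\}$; symmetrically $V(P_{vt})\cap V(G')=\{v\}$, which is~\textsf{(c)}. (The degenerate cases $u=s$ or $v=t$, when $s$ or $t$ already lies in $G'$, are handled automatically.)

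The crux is~\textsf{(b)} together with the distinctness of $u$ and $v$ that it implicitly demands, since $u\in V(P_{su})$ and $v\in V(P_{vt})$ force $u\neq v$ whenever the two vertex sets are disjoint. Here I would use that $P$ carries at least two distinct $G'$-vertices (namely $a$ and $b$), so its first and last $G'$-vertices must differ, and hence $u$ occurs strictly before $v$ along $P$. Consequently the prefix ending at $u$ and the suffix beginning at $v$ occupy disjoint stretches of the simple path $P$, yielding $V(P_{su})\cap V(P_{vt})=\emptyset$. The only real obstacle to watch for is exactly this distinctness: a subgraph meeting $P$ in a single vertex would collapse the prefix and suffix at a shared endpoint and break disjointness, and it is precisely the hypothesis that $G'$ contains an edge that rules this out.
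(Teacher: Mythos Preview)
Your argument is correct. The paper does not actually prove this lemma here; it is recalled from~\cite{tr-j} without proof, so there is no in-paper argument to compare against. Your approach---fixing an \stpath $P$ through an edge of $G'$ (guaranteed by Reduction Rule~\ref{red:stpath-undirected}), taking $u$ and $v$ to be the first and last vertices of $V(G')$ along $P$, and reading off $P_{su}$ and $P_{vt}$ as the resulting prefix and suffix---is the standard one, and your handling of the disjointness in~\textsf{(b)} via the presence of two distinct $G'$-vertices $a,b$ on $P$ is exactly the right observation.
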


With respect to Lemma~\ref{lemma:local-s-t}, we refer to the vertex $u$ as a \textit{local source} and vertex $v$ as a \textit{local destination} for the subgraph $G'$. Note that a subgraph may have multiple local source-destination pairs.

Now we state the \textit{tracking set condition}, which is useful for validation of a tracking set~\cite{tr-j}.

\vspace{1mm}
\noindent\fbox{
  \begin{minipage}{0.96\textwidth}
 
\textbf{Tracking Set Condition:}

For a graph $G=(V,E)$, with terminal vertices $s,t\in V$, a set of vertices $T\subseteq V$, is said to satisfy the \tsc if there does not exist a pair of vertices $u,v\in V$, such that the following holds:

\begin{itemize}

\item there exist two distinct paths, say $P_1$ and $P_2$, between $u$ and $v$ in $(G\setminus (T\cup\{s,t\})) \cup \{u,v\}$, and
\item there exists a path from $s$ to $u$, say $P_{su}$, and a path from $v$ to $t$, say $P_{vt}$, in $(G\setminus (V(P_1)\cup V(P_2))) \cup \{u,v\}$, and $V(P_{su}) \cap V(P_{vt}) = \emptyset$, i.e. $P_{su}$ and $P_{vt}$ are mutually vertex disjoint, and also vertex disjoint from $P_1$ and $P_2$.

\end{itemize}
 
  \end{minipage}
  }
  
\vspace{1mm}

It is known that for a reduced graph $G$, a set of vertices $T\subseteq V(G)$ is a tracking set if and only if $T$ satisfies the \textit{tracking set condition}. We use this fact, to prove the following lemma.

\begin{lemma}
\label{lemma:not-trs-cycle}
In a graph $G$, if $T\subseteq V(G)$ is not a tracking set for $G$, then there exists two \stpaths with the same sequence of trackers, and they form a cycle $C$ in $G$, such that $C$ has a local source $a$ and a local destination $b$, and $T\cap (V(C)\setminus\{a,b\})=\emptyset$.
\end{lemma}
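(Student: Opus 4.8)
The plan is to run the argument through the stated equivalence between tracking sets and the \tsc, extract a certificate of failure, and carve a single clean cycle out of it. Since $T$ is not a tracking set, by the quoted equivalence it violates the \tsc, so there exist vertices $u,v$, two distinct $u$-$v$ paths $P_1,P_2$ whose interiors avoid $T\cup\{s,t\}$, and vertex-disjoint paths $P_{su}$ (from $s$ to $u$) and $P_{vt}$ (from $v$ to $t$) that are moreover vertex-disjoint from $P_1$ and $P_2$ except at the shared endpoints $u,v$. The first observation I would record is that $P_{su}\cdot P_1\cdot P_{vt}$ and $P_{su}\cdot P_2\cdot P_{vt}$ are both simple \stpaths realizing the same tracker sequence: their only trackers lie on the common prefix $P_{su}$ and common suffix $P_{vt}$, since the interiors of $P_1,P_2$ meet $T$ nowhere and the endpoints $u,v$ are shared and thus contribute identically.

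Next I would isolate a single cycle. As $P_1\ne P_2$, let $a$ be the last vertex up to which $P_1$ and $P_2$ coincide when traversed from $u$, and let $b$ be the first vertex at which they meet again after $a$; such a $b$ exists because $v$ lies on both paths. Writing $C_1,C_2$ for the $a$-$b$ subpaths of $P_1,P_2$, the choice of $a$ and $b$ makes these arcs internally vertex-disjoint, so $C:=C_1\cup C_2$ is a cycle. Replacing the arc $C_1$ by $C_2$ inside the first \stpath yields a second \stpath differing from it only along $C$; these are the two \stpaths that ``form the cycle $C$,'' and the observation above shows they still carry identical tracker sequences. Since every internal vertex of $C$ is an internal vertex of $P_1$ or $P_2$, it avoids $T$, giving $T\cap(V(C)\setminus\{a,b\})=\emptyset$.

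It remains to certify $a$ as a local source and $b$ as a local destination of $C$ in the sense of \Cref{lemma:local-s-t}. Here I would let $R_{ua}$ be the common $u$-$a$ portion of $P_1,P_2$ and set $P_{sa}:=P_{su}\cdot R_{ua}$, and let $R_{bv}$ be the $b$-$v$ portion of $P_1$ and set $P_{bt}:=R_{bv}\cdot P_{vt}$. Both are subpaths of the simple path $P_{su}\cdot P_1\cdot P_{vt}$, lying before $a$ and after $b$ respectively, hence mutually vertex-disjoint, which gives condition (b). For condition (c), the simplicity of $P_1$ and $P_2$ forces $R_{ua}$ to meet $C$ only at $a$ and $R_{bv}$ to meet $C$ only at $b$, while $P_{su},P_{vt}$ meet $V(P_1)\cup V(P_2)\supseteq V(C)$ only at $u,v$, which lie on $C$ precisely when they coincide with $a,b$; hence $V(P_{sa})\cap V(C)=\{a\}$ and $V(P_{bt})\cap V(C)=\{b\}$, as required.

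The step I expect to be the main obstacle is the cycle extraction together with verifying condition (c): $P_1$ and $P_2$ may share several internal vertices, so one must commit to the specific ``first divergence'' vertex $a$ and ``first reconvergence'' vertex $b$ and then carefully check that the attached $s$-side and $t$-side paths touch the extracted cycle only at $a$ and $b$. Everything else is bookkeeping on the disjointness inherited from the \tsc and on the simplicity of the constituent paths.
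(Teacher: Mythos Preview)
Your approach is essentially the paper's: invoke the failed \tsc to extract $u,v,P_1,P_2,P_{su},P_{vt}$, take $a$ as the last vertex of agreement and $b$ as the first reconvergence, form $C$ from the two $a$--$b$ arcs, and attach $s$--$a$ and $b$--$t$ paths. There is, however, one genuine slip in your verification of condition~(c) for the local destination.

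You take $R_{bv}$ to be the $b$--$v$ portion of $P_1$ and assert that ``the simplicity of $P_1$ and $P_2$ forces \dots\ $R_{bv}$ to meet $C$ only at $b$.'' Simplicity of $P_1$ gives only $R_{bv}\cap C_1=\{b\}$; it says nothing about $R_{bv}\cap C_2$. And indeed an internal vertex of $C_2$ (the $P_2$-arc from $a$ to $b$) may lie on $P_1$ \emph{after} $b$. Concretely, with $P_1=u,a,x,b,c,v$ and $P_2=u,a,y,c,z,b,v$, the vertex $b$ is correctly the first reconvergence along $P_1$, yet $c$ is internal to $C_2$ and also lies on $R_{bv}=b,c,v$, so $R_{bv}\cap V(C)\supsetneq\{b\}$. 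The paper sidesteps this by taking the $b$--$v$ portion of $P_2$ instead: then the only arc of $C$ that is not automatically handled by simplicity of $P_2$ is $C_1$, and $C_1$'s interior contains no vertex of $P_2$ whatsoever by the very choice of $b$. With that one-word change (use $P_2$ rather than $P_1$ for the suffix) your argument goes through; you then need to re-verify~(b) directly rather than via the single simple path $P_{su}\cdot P_1\cdot P_{vt}$, but that is routine. You correctly flagged condition~(c) as the delicate step---this is exactly where the care is needed.
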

\begin{proof}
Let $G$ be a graph, such that $T\subseteq V(G)$ is not a tracking set for $G$. Due to tracking set condition, it is known that in such a case, there exists two distinct vertices $u,v$ along with two distinct paths $P_1,P_2$ between $u$ and $v$, such that there are no trackers on $P_1$ and $P_2$ except possibly at $u$ and $v$. Further, there exists a path $P_{su}$ from $s$ to $u$ and a path $P_{vt}$ from $v$ to $t$, and such that these $P_{su}$ and $P_{vt}$ are vertex disjoint, and they intersect with $P_1$ and $P_2$ only at $u$ and $v$. Let $G'$ be the graph induced by $V(P_1)\cup V(P_2)$. Observe that $u$ and $v$ form a local source-destination pair for $G'$. Note that no vertex, other than possible $u$ and $v$, in $G'$ is a tracker.
Starting from $u$, let $a$ be the last vertex in $G'$ until which paths $P_1$ and $P_2$ have the same sequence of vertices. Let $b\in V(P_1)$ be the first vertex in $P_1$ after $a$, such that $b\in V(P_1)\cap V(P_2)$. We use $P_{ab_1}$ to denote the subpath of $P_1$ lying between vertices $a$ and $b$, and $P_{ab_2}$ to denote the subpath of $P_2$ lying between the vertices $a$ and $b$. Observe that paths $P_{ab_1}$ and $P_{ab_2}$ are vertex disjoint (except for vertices $a$ and $b$) and thus form a cycle, say $C$. Further there exists a subpath of $P_2$ between $b$ and $v$, that intersects $C$ only at $b$. Since $P_1$ and $P_2$ share the same vertex sequence from $u$ to $a$, $a$ is a local source for $C$. Also, by construction, $b$ is a local destination for $C$. Note that it is possible that $u=a$ and/or $b=v$. However, $V(C)\setminus\{a,b\}$ does not contain any trackers. Hence the lemma holds.
\qed
\end{proof}

\section{Tracking Paths in Chordal Graphs and Tournaments}

In this section, we give polynomial time algorithms for solving \tp for chordal graphs and tournaments.

\subsection{Chordal Graphs}
Here we give a polynomial time algorithm to find a tracking set for undirected chordal graphs. 
Recall that chordal graphs are those graphs in which each cycle of length greater than three has a chord. Many problems that are known to be \textsc{NP}-hard on general graphs are polynomial time solvable for chordal graphs e.g. chromatic number, feedback vertex set, independent set~\cite{golumbic}.

In undirected graphs, a tracking set is also a feedback vertex set~\cite{tr-j}. However, a tracking set can be arbitrarily larger in size compared to a feedback vertex set. This holds true for chordal graphs as well. See Figure~\ref{fig:chordal-fvs-not-trs}. 
\begin{figure}[ht]
\centering
\includegraphics[scale=0.7]{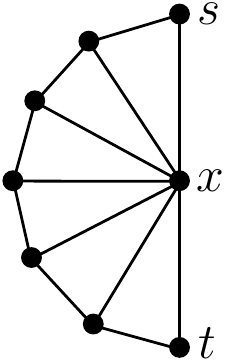} 
\caption{Depiction of a chordal graph on $n$ vertices with an optimum tracking set ($V(G)\setminus\{s,t\}$) of size $n-2$ and an FVS (vertex $x$) of size $1$} 
\label{fig:chordal-fvs-not-trs}
\end{figure}

\begin{algorithm}[ht]
\caption{Finding Tracking Set for a Chordal Graph.}
\label{alg:chordal}


\KwIn{Chordal graph $G=(V,E)$ and vertices $s,t\in V$.}
\KwOut{Tracking Set $T\subseteq V$ for $G$.}

\SetAlgoLined

\BlankLine
 Initialize $T=\emptyset$;
Apply Reduction Rule~\ref{red:stpath-undirected}\;
\BlankLine
\ForEach{$e=(a,b) \in E$}{  
  \ForEach{$x\in (N(a)\cap N(b))\setminus T$}{
  	\If{$\exists$ an \stpath $P$ in $G-x$ such that $e\in E(P)$}{
  		$T=T\cup\{x\}$\;
  	}
  }
  }
    Return $T$\;

\end{algorithm}

Algorithm~\ref{alg:chordal} gives a procedure to compute a minimum tracking set for a chordal graph $G$. We prove its correctness in the following lemma.

\begin{lemma}
\label{lemma:chordal-algo-correctness}
Algorithm~\ref{alg:chordal} gives an optimum tracking set for a chordal graph.
\end{lemma}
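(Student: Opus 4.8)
The plan is to prove that the output $T$ of Algorithm~\ref{alg:chordal} is both a lower and an upper bound on the optimum. Concretely, I would establish (A) that every vertex the algorithm puts into $T$ must belong to \emph{every} tracking set of $G$, giving $|T|\le\mathrm{OPT}$, and (B) that $T$ is itself a valid tracking set, giving $\mathrm{OPT}\le|T|$; together these show $T$ is optimum. I will work throughout with the graph reduced by Reduction Rule~\ref{red:stpath-undirected}, which changes neither validity nor minimality of a tracking set, and I will freely use that a set is a tracking set iff it meets the tracking set condition, along with Lemmas~\ref{lemma:local-s-t} and~\ref{lemma:not-trs-cycle}.

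For (A), I would look at the moment the algorithm adds a vertex $x$ while processing an edge $e=(a,b)$ with $x\in N(a)\cap N(b)$: by the algorithm's test there is an \stpath $P$ in $G-x$ using $e$. Replacing the edge $(a,b)$ in $P$ by the detour $a\text{-}x\text{-}b$ produces a second \stpath $P'$; since $x\notin V(P)$ this $P'$ is simple and satisfies $V(P')=V(P)\cup\{x\}$ with all shared vertices in the same order. Thus $P$ and $P'$ are distinct \stpaths differing only in $x$, so any tracking set that omitted $x$ would assign them the same tracker sequence. Hence $x$ lies in every tracking set; as this applies to each added vertex, $T\subseteq T^{*}$ for every tracking set $T^{*}$, and so $|T|\le\mathrm{OPT}$.

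For (B), I would argue by contradiction using a descent on cycles. If $T$ were not a tracking set, Lemma~\ref{lemma:not-trs-cycle} supplies a cycle $C$ with a local source $a$ and local destination $b$ (with the vertex-disjoint paths $P_{sa},P_{bt}$ of Lemma~\ref{lemma:local-s-t} meeting $C$ only at $a,b$) such that $T\cap(V(C)\setminus\{a,b\})=\emptyset$; call such a cycle \emph{bad}. I would take a bad cycle $C$ with the fewest vertices and let $Q_1,Q_2$ be its two $a$-$b$ arcs. If $|V(C)|=3$, then $C$ is a triangle whose arcs are the edge $(a,b)$ and a detour $a\text{-}x\text{-}b$, so $x\in N(a)\cap N(b)$ and $P_{sa}\cdot(a,b)\cdot P_{bt}$ is an \stpath in $G-x$ through $(a,b)$; hence the algorithm must have added $x$, contradicting $x\in V(C)\setminus\{a,b\}$. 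If $|V(C)|\ge4$, then $C$ is not induced (else it would be an induced cycle of length $\ge4$ in a chordal graph), so it has a chord, and I would case on the chord: a chord with both endpoints on one arc (endpoints at $a$ or $b$ included), or the chord $(a,b)$, lets me shortcut that arc while keeping $a,b$ as local source and destination; a chord $(p,q)$ with $p\in Q_1\setminus\{a,b\}$ and $q\in Q_2\setminus\{a,b\}$ lets me keep the sub-cycle through $b$, now with local source $p$ (witnessed by $P_{sa}\cdot Q_1[a\to p]$) and local destination $b$. In each case the resulting cycle is again bad but strictly smaller, contradicting minimality.

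The step I expect to be most delicate is checking that each reduced cycle is genuinely bad: one must verify that the disjoint witnessing paths to $s$ and $t$ persist and that the smaller cycle acquires no internal tracker. The subtle case is the between-arcs chord, where the local source migrates from $a$ to $p$; there I must confirm that $P_{sa}\cdot Q_1[a\to p]$ meets the new cycle only in $p$ and stays disjoint from $P_{bt}$, which follows from the internal disjointness of $Q_1$ and $Q_2$ and from $P_{sa},P_{bt}$ touching $C$ only at $a,b$. Once this descent is in place, minimality forces the extremal bad cycle to be a triangle, and the triangle case completes the contradiction, establishing (B).
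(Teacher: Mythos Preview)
Your argument is correct. Part (A) matches the paper exactly. For part (B) you take a genuinely different route.

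The paper, once it has a bad cycle $C$ with local source $u$ and local destination $v$, argues directly at $u$: in a chordal graph either the two cycle-neighbours $w,x$ of $u$ are adjacent, or some chord of $C$ is incident on $u$; in either case one obtains an edge $e$ and a vertex $y\in V(C)\setminus\{u,v\}$ with $y\in N(a)\cap N(b)$ for $e=(a,b)$ and an \stpath through $e$ avoiding $y$, so the algorithm would have marked $y$. This is short but quietly relies on the structural fact that one of those two configurations must occur at $u$, which the paper does not justify. Your descent argument instead uses only the bare definition of chordality (every cycle of length $\ge 4$ has \emph{some} chord) and pushes all the work into showing that any chord yields a strictly smaller bad cycle; the case analysis (same-arc chord, the chord $(a,b)$, cross-arc chord with migrating local source) is exactly what is needed, and your checks that the witnessing paths $P_{sa}\cdot Q_1[a\to p]$ and $P_{bt}$ remain disjoint from each other and from the new cycle are correct. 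The price is a slightly longer case split; the gain is a self-contained proof that does not appeal to any auxiliary property of chordal graphs beyond the definition, and which in fact shows the stronger statement that a \emph{minimum} bad cycle is always a triangle.
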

\begin{proof}
Algorithm~\ref{alg:chordal} starts by ensuring that each vertex and edge in the input graph $G$ participates in an \stpath of $G$. Next for each edge $e=(a,b)\in E$, if there exists a vertex $x\in (N(a)\cap N(b))\setminus T$, we check if there exists an \stpath in $G-x$ that contains the edge $e$. Let $P$ such a path in $G-x$. Now consider the path $P'$ that can be obtained by replacing the edge $e$ in $P$ by the path $(a,x)\cdot(b,x)$ along with the vertex $x$. Observe that the vertex sets of $P$ and $P'$ differ only in vertex $x$. Hence, $x$ necessarily belongs to a tracking set for $G$. 

Now we prove that Algorithm~\ref{alg:chordal} indeed returns an optimal tracking set $T$ for $G$. Suppose not. Then $T$ is not a tracking set for $G$. Due to Lemma~\ref{lemma:not-trs-cycle}, there exists two \stpaths, say $P_1,P_2$ and they form a cycle $C$ in $G$, such that $C$ has a local source $u$ and a local destination $v$, and $V(C)\setminus\{u,v\}$ does not contain any trackers. See Figure~\ref{fig:paths-cycle}.

\begin{figure}[ht]
\centering
\includegraphics[scale=0.6]{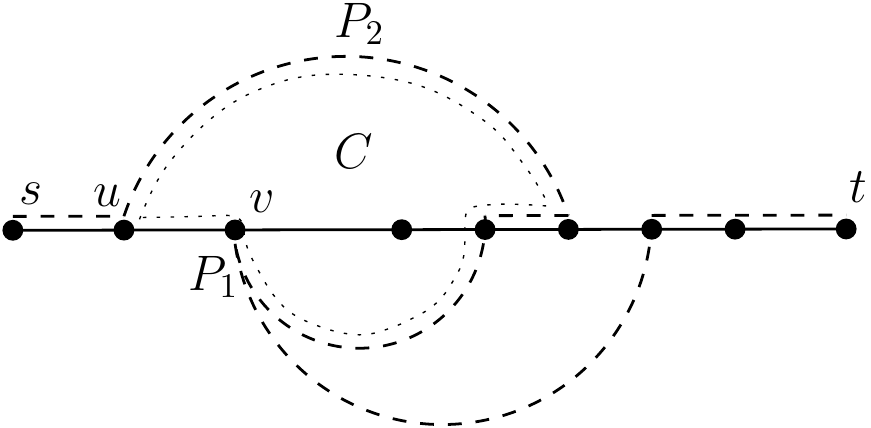} 
\caption{Indistinguishable \stpaths in a graph form a cycle (marked in dotted lines)} 
\label{fig:paths-cycle}
\end{figure}

Path $P_1$ is marked in solid lines, while path $P_2$ is marked in dashed lines. Since $P_1$ and $P_2$ contain the same sequence of trackers, no vertex in $V(C)\setminus\{u,v\}$ can be a tracker. Since we consider graphs without any parallel edges, there exists at least one vertex in $V(C)\setminus\{u,v\}$. 

First we consider the case where $C$ is a triangle. Due to Algorithm~\ref{alg:chordal}, the vertex in $V(C)\setminus\{u,v\}$ would necessarily have been marked as a tracker. This contradicts the assumption that no vertex in $V(C)\setminus \{u,v\}$ is marked as a tracker. 

Next we consider the case when $C$ is not a triangle, i.e. $C$ is a cycle containing four or more vertices. Since $G$ is a chordal graph, $C$ necessarily contains a chord. We consider the following two cases based on whether a chord is incident on the vertex $u$ or not. Let $w$ and $x$ be two vertices such that $w,x\in N(u)\cap V(C)$ and $(w,x)\in E$. Without loss of generality, let $x\in V(P_1)$. Observe that the edge $(u,x)$ in path $P_1$ can be replaced by the concatenated path $(u,w)\cdot(w,x)$, to obtain a new path that differs from $P_1$ only at the vertex $w$. Hence $w$ must have been marked as a tracker in Algorithm~\ref{alg:chordal}. Next we consider the case when a chord in $C$ is incident on $u$. Let $a\in N(u)\cap V(C)$ and $b\in N(a)\cap V(C)$, such that $a\neq b\neq u$, and $(b,u)\in E$. Observe that there exists an \stpath containing the edge $(b,u)$, in which $(b,u)$ can be replaced with the path $(b,a)\cdot(a,u)$, to obtain a new path that differs only at the vertex $a$. Due to Algorithm~\ref{alg:chordal}, vertex $a$ must have been marked as a tracker. Both the above cases contradict the assumption that no vertex in $V(C)\setminus\{u,v\}$ is a tracker. Hence Algorithm~\ref{alg:chordal} gives an optimum tracking set for a chordal graph.
\qed
\end{proof}

Next we prove that Algorithm~\ref{alg:chordal} runs in polynomial time.

\begin{lemma}
\label{lemma:chordal-algo-time}
Algorithm~\ref{alg:chordal} runs in time $\Oh(m.n^3)$.
\end{lemma}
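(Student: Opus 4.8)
The plan is to bound the total running time by accounting separately for the preprocessing step and the two nested loops, with the dominant cost coming from the innermost path-existence test. First I would note that applying Reduction Rule~\ref{red:stpath-undirected} costs quadratic, i.e. $\Oh(n^2)$, time, which will turn out to be dominated by the main computation. The outer loop of Algorithm~\ref{alg:chordal} iterates once per edge, giving $\Oh(m)$ iterations. For a fixed edge $e=(a,b)$, the common neighbourhood $N(a)\cap N(b)$ can be computed in $\Oh(n)$ time and has size $\Oh(n)$, so the inner loop runs $\Oh(n)$ times; membership in the current tracker set $T$ can be tested in $\Oh(1)$ by storing $T$ as a Boolean array, so enforcing the condition $x\notin T$ adds no asymptotic overhead.

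The crux of the analysis is bounding the cost of a single execution of the test ``$\exists$ an \stpath $P$ in $G-x$ with $e\in E(P)$''. I would argue this test runs in $\Oh(n^2)$ time as follows. A simple \stpath uses the edge $(a,b)$ precisely when there are two vertex disjoint paths joining the terminal set $\{s,t\}$ to the endpoint set $\{a,b\}$ in $G-x$ with the edge $e$ itself deleted; concatenating such paths through $e$ yields the desired \stpath, and conversely any \stpath through $e$ supplies them. The existence of these two vertex disjoint paths is decided by computing a maximum flow of value $2$ from a super-source adjacent to $s$ and $t$ to a super-sink adjacent to $a$ and $b$, using unit vertex capacities, which forces the two flow paths to enter through $s$ and $t$ and to leave through $a$ and $b$ respectively. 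Since at most two augmentations are needed and each is found by a breadth-first search on the vertex-split network in $\Oh(m+n)$ time, the test costs $\Oh(m+n)=\Oh(n^2)$; equivalently one may invoke the quadratic-time \stpath-participation routine that underlies Reduction Rule~\ref{red:stpath-undirected}, applied to $G-x$.

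Multiplying the three bounds gives $\Oh(m)\cdot\Oh(n)\cdot\Oh(n^2)=\Oh(m\,n^3)$, and since the preprocessing cost $\Oh(n^2)$ is subsumed, the overall running time is $\Oh(m\,n^3)$ as claimed. The only nontrivial step is the per-test bound of the previous paragraph: the loop-counting is routine, so the main obstacle is arguing that whether an edge lies on some simple \stpath in $G-x$ can be decided in quadratic time. I expect this to be the point a referee would scrutinise, so I would state the flow reduction explicitly (or cite the known quadratic-time participation test) rather than leaving it implicit.
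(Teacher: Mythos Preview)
Your proposal is correct and follows essentially the same structure as the paper's proof: both bound the preprocessing by $\Oh(n^2)$, count $\Oh(m)$ outer iterations and $\Oh(n)$ inner iterations, and argue the innermost edge-participation test costs $\Oh(n^2)$, yielding $\Oh(m\,n^3)$ overall. The only difference is in how the $\Oh(n^2)$ bound for the test is justified: the paper simply cites the disjoint-paths algorithm of~\cite{KAWARABAYASHI2012424} to find vertex-disjoint $s$--$a$ and $b$--$t$ paths, whereas you give a more elementary and self-contained max-flow reduction (two augmentations with unit vertex capacities), which also cleanly handles both orientations of the edge at once.
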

\begin{proof}
Let $G=(V,E)$ be the input graph, $|V|=n$ and $|E|=m$. Due to~\cite{tr-j}, it is known that it takes $\Oh(n^2)$ (quadratic) time to apply Reduction Rule~\ref{red:stpath-undirected}. Next for each edge $e=(a,b)\in E$, we consider the set of vertices that are adjacent to both end points $a,b$ of the edge $e$. This takes $\Oh(m.n)$ time. Now for each vertex $x$ that is adjacent to both $a$ and $b$, we check if $e$ participates in some \stpath in the graph $G-x$. Removal of vertex $x$ from $G$ takes constant time. In order to check if $e$ participates in some \stpath in $G-x$, we check if there exists a path between $s$ and $a$, say $P_1$, and a path between $b$ and $t$, say $P_2$, such that $V(P_1)\cap V(P_2)=\emptyset$. This can be done using the algorithm for finding vertex disjoint paths shown in~\cite{KAWARABAYASHI2012424} in $\Oh(n^2)$ time. Thus the total time taken to run Algorithm~\ref{alg:chordal} is $\Oh(n^2)+\Oh(m.n.n^2)$, i.e. $\Oh(m.n^3)$.
\qed
\end{proof}

From Lemma~\ref{lemma:chordal-algo-correctness} and Lemma~\ref{lemma:chordal-algo-time}, we have the following theorem.

\begin{theorem}
\label{theorem:chordal}
\tp can be solved in polynomial time in chordal graphs.
\end{theorem}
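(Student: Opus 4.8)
The plan is to exhibit a concrete polynomial-time algorithm for \tp restricted to chordal graphs, namely Algorithm~\ref{alg:chordal}, and then establish two independent facts about it: that it returns a minimum tracking set, and that it halts in polynomial time. The theorem then follows immediately by combining these. Algorithm~\ref{alg:chordal} first preprocesses the input with Reduction Rule~\ref{red:stpath-undirected} so that every vertex and edge lies on some \stpath, and then, for each edge $e=(a,b)$ and each common neighbor $x\in N(a)\cap N(b)$, marks $x$ as a tracker exactly when $e$ still participates in an \stpath of $G-x$. Since the two required properties are already isolated as Lemma~\ref{lemma:chordal-algo-correctness} and Lemma~\ref{lemma:chordal-algo-time}, the bulk of the work has been carried out there, and I would present the theorem as their combination.

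For the correctness half I would split the argument into a necessity direction and a sufficiency direction. Necessity is the straightforward part: whenever the algorithm marks a vertex $x\in N(a)\cap N(b)$, there is by construction an \stpath $P$ through the edge $(a,b)$ in $G-x$, and replacing $e$ in $P$ by the path $(a,x)\cdot(b,x)$ produces a second \stpath whose vertex set differs from that of $P$ only in $x$. Consequently $x$ lies in \emph{every} tracking set, so the output $T$ never exceeds the optimum in cardinality. This is exactly the first portion of Lemma~\ref{lemma:chordal-algo-correctness}, which I would invoke directly.

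The sufficiency direction — that the marked set $T$ is genuinely a tracking set — is where chordality is essential, and I expect this to be the main obstacle. Here I would reason by contradiction using Lemma~\ref{lemma:not-trs-cycle}: if $T$ failed to be a tracking set, there would be a cycle $C$ with a local source $u$, a local destination $v$, and no trackers in $V(C)\setminus\{u,v\}$. When $C$ is a triangle, its single interior vertex is a common neighbor witnessing a forced tracker of the type above and must have been marked, a contradiction. When $C$ has four or more vertices, chordality forces a chord, and a short case analysis on whether the chord is incident to $u$ always exhibits an interior vertex of $C$ that is a forced tracker, again contradicting the emptiness of $T$ on $V(C)\setminus\{u,v\}$. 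This establishes $T$ as a tracking set, and together with necessity, as a minimum one; this is the content of Lemma~\ref{lemma:chordal-algo-correctness}.

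Finally, for the running time I would charge each phase separately: $\Oh(n^2)$ to apply Reduction Rule~\ref{red:stpath-undirected}, $\Oh(mn)$ edge–common-neighbor pairs to enumerate, and an $\Oh(n^2)$ vertex-disjoint-paths test per pair (via the algorithm of~\cite{KAWARABAYASHI2012424}) to decide whether $e$ still lies on an \stpath of $G-x$, giving $\Oh(mn^3)$ overall, which is Lemma~\ref{lemma:chordal-algo-time}. Combining Lemma~\ref{lemma:chordal-algo-correctness} and Lemma~\ref{lemma:chordal-algo-time} then yields that \tp is solvable in polynomial time on chordal graphs.
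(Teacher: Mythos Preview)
Your proposal is correct and follows essentially the same route as the paper: the theorem is obtained by combining Lemma~\ref{lemma:chordal-algo-correctness} (optimality of Algorithm~\ref{alg:chordal}, via the forced-tracker necessity argument and the chordality-based cycle contradiction using Lemma~\ref{lemma:not-trs-cycle}) with Lemma~\ref{lemma:chordal-algo-time} (the $\Oh(mn^3)$ running-time bound). Your summary of both lemmas matches the paper's own arguments closely, so there is nothing to add.
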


\subsection{Tournaments}
Here we give a polynomial time algorithm to find a tracking set for tournament graphs. Recall that tournaments are directed graphs that have a directed edge between each pair of vertices in the graph. A lot of problems including feedback vertex set and feedback arc set are known to be \textsc{NP}-hard in tournaments. From~\cite{tr1-j} it is known that \tp is \textsc{NP}-hard for directed acyclic graphs. Since a directed acyclic graph is also a directed graph, this implies that \tp is \textsc{NP}-hard for directed graphs as well. However, as we prove now, \tp is in \textsc{P} for tournament graphs.

We start by first applying Reduction Rule~\ref{red:stpath-undirected}, to ensure that each vertex and edge in the input graph participates in an \stpath. Lemma~\ref{lemma:red-stpath-tournament} in Appendix proves that Reduction Rule~\ref{red:stpath-undirected} is safe and can be applied on tournament graphs in polynomial time.

\begin{lemma}
\label{lemma:red-stpath-tournament}
Reduction Rule~\ref{red:stpath-undirected} is safe and can be applied in polynomial time in tournament graphs.
\end{lemma}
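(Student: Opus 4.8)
The plan is to prove the statement in two independent parts: \emph{safety}, i.e. that deleting a vertex or edge lying on no \stpath produces an equivalent tracking‑set instance, and \emph{efficiency}, i.e. that the set of such vertices and edges can be computed in polynomial time on a tournament. The safety argument carries over almost verbatim from the undirected setting of~\cite{tr-j}, so the genuinely new content is the efficiency claim, where directedness turns the question ``does this vertex/edge lie on a simple \stpath?'' into the delicate point.

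For safety, let $z$ be a vertex or edge of the tournament $G$ that participates in no \stpath, and let $G' = G - z$. By assumption every \stpath of $G$ avoids $z$ and is therefore an \stpath of $G'$; conversely every \stpath of $G'$ is an \stpath of $G$. Hence $G$ and $G'$ have exactly the same collection of \stpaths. Since the tracking set condition only constrains how a set $T$ separates pairs of \stpaths, a set $T \subseteq V(G')$ is a tracking set of $G'$ if and only if it is a tracking set of $G$. Finally, if $z$ is a vertex then it appears in no \stpath, hence in no tracker sequence, so any minimum tracking set of $G$ may be taken to exclude $z$; thus deleting $z$ neither destroys a tracking set nor changes its minimum cardinality, and the reduction is safe.

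For efficiency I would reduce the participation test to disjoint‑path queries and invoke the polynomial‑time disjoint‑paths algorithm for tournaments of~\cite{chudnovsky}. A directed edge $e=(a,b)$ lies on a simple \stpath if and only if there are directed paths $P_{sa}$ from $s$ to $a$ and $P_{bt}$ from $b$ to $t$ in $G-e$ with $V(P_{sa}) \cap V(P_{bt}) = \emptyset$, for then $P_{sa}\cdot e \cdot P_{bt}$ is a simple \stpath; this is a single $2$‑linkage query with terminal pairs $(s,a)$ and $(b,t)$. A vertex $v \notin \{s,t\}$ lies on a simple \stpath if and only if there are an in‑neighbor $a$ and an out‑neighbor $c$ of $v$ and vertex‑disjoint directed paths $s \rightsquigarrow a$ and $c \rightsquigarrow t$ in $G-v$; testing this amounts to at most $\Oh(n^2)$ disjoint‑path queries over the choices of $(a,c)$. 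Iterating over all $\Oh(n)$ vertices and $\Oh(n^2)$ edges and running the tournament disjoint‑paths routine on each query yields a polynomial‑time procedure.

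The main obstacle is precisely that the participation test is a two‑vertex‑disjoint‑paths question, which is \NP-hard in general digraphs, so the argument cannot be reduced to mere reachability from $s$ and to $t$; it is essential that the input is a tournament, where the $2$‑linkage problem is tractable via~\cite{chudnovsky}. A secondary point to verify is that the path obtained by concatenating the two linked paths through $e$ (respectively through $v$) is \emph{simple}: this follows from the vertex‑disjointness of the linked paths together with having removed $e$ (respectively $v$) before the query, so the concatenation cannot revisit a vertex. With these observations the lemma follows.
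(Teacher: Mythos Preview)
Your argument is correct and mirrors the paper's: safety follows because elements lying on no \stpath cannot influence which vertex subsets are tracking sets, and efficiency is obtained by reducing the participation test to a $2$-disjoint-paths query answered via the tournament linkage algorithm of~\cite{chudnovsky}. The paper's version is marginally leaner---it poses the linkage query for an edge $e=(a,b)$ directly in $G$ rather than in $G-e$ (which sidesteps the technicality that $G-e$ is no longer a tournament; the two queries coincide anyway, since simple $s$--$a$ and $b$--$t$ paths cannot traverse $e$), and it omits your explicit per-vertex test by simply deleting isolated vertices once all non-participating edges have been removed.
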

\begin{proof}
Consider a tournament graph $G=(V,E)$.
If a vertex or an edge in $G$ does not participate in any \stpath, then it can not contribute to tracking any paths, nor does it need to be considered while ensuring that all \stpaths have unique sequence of trackers. Hence Reduction Rule~\ref{red:stpath-undirected} is safe for tournament graphs.

In order to apply Reduction Rule~\ref{red:stpath-undirected}, for each edge $e=(a,b)\in E$, check if there exists a path from $s$ to $a$, say $P_1$, and a path from $b$ to $t$, say $P_2$, such that $V(P_1)\cap V(P_2)=\emptyset$, using the algorithm for finding vertex disjoint paths in tournament given in~\cite{chudnovsky} in $\Oh(n^2)$ time. If such paths do not exist, then delete the edge $e$. After repeating the process for all edges in $G$, we delete the isolated vertices. Thus the Reduction Rule~\ref{red:stpath-undirected} can be applied in $\Oh(m.n^2)$ time in a tournament graph.
\qed
\end{proof}

\begin{algorithm}[ht]
\caption{Finding Tracking Set for a Tournament Graph.}
\label{alg:tournament}


\KwIn{Tournament graph $G=(V,E)$ and vertices $s,t\in V$.}
\KwOut{Tracking Set $T\subseteq V$ for $G$.}

\SetAlgoLined

\BlankLine
 Initialize $T=\emptyset$\;
Apply Reduction Rule~\ref{red:stpath-undirected}\;
\BlankLine
\ForEach{$e=(a,b) \in E$}{  
  \ForEach{$x\in (N^+(a)\cap N^-(b))\setminus T$}{
  	\If{$\exists$ an \stpath $P$ in $G-x$ such that $e\in E(P)$}{
  		$T=T\cup\{x\}$\;
  	}
  }
  }
    Return $T$\;

\end{algorithm}

Algorithm~\ref{alg:tournament} gives a procedure to compute a minimum tracking set for a tournament graph $G$. Next two lemmas prove the correctness and running time of the algorithm.

\begin{lemma}
\label{lemma:tournament-algo-correctness}
Algorithm~\ref{alg:tournament} gives an optimum tracking set for a tournament graph.
\end{lemma}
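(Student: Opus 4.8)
The plan is to mirror the structure of the chordal-graph correctness proof (Lemma~\ref{lemma:chordal-algo-correctness}), adapting it to the directed setting of tournaments. The argument naturally splits into two directions. \textbf{Necessity of every tracker chosen by the algorithm.} Whenever the algorithm adds a vertex $x \in (N^+(a)\cap N^-(b))\setminus T$ because there is an \stpath $P$ through the edge $e=(a,b)$ in $G-x$, I would exhibit a second \stpath $P'$ obtained by rerouting $e=(a,b)$ through $x$, i.e. replacing the arc $(a,b)$ by the length-two directed path $(a,x)\cdot(x,b)$. This is well-defined precisely because $x \in N^+(a)\cap N^-(b)$ means the arcs $a\to x$ and $x\to b$ exist. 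Since $P$ avoids $x$, the path $P'$ is a genuine distinct \stpath whose vertex set differs from $V(P)$ only in $x$; hence any tracking set must contain $x$, so every vertex output is forced and $T$ is a subset of every tracking set. This gives optimality provided $T$ is a tracking set at all.

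\textbf{Sufficiency, i.e. $T$ is a tracking set.} Suppose not. By Lemma~\ref{lemma:not-trs-cycle} there exist two indistinguishable \stpaths forming a cycle $C$ with a local source $a$ and local destination $b$ such that $T\cap(V(C)\setminus\{a,b\})=\emptyset$. The key structural point is that in a tournament every pair of the (at least two) vertices lying strictly between $a$ and $b$ on $C$ is joined by an arc, so $C$ behaves like a ``chorded'' cycle automatically; the role played by chordality in the chordal case is replaced here by the completeness of the tournament. I would then argue that some vertex of $V(C)\setminus\{a,b\}$ is forced to be a tracker. Concretely, consider the two internally disjoint directed $a$--$b$ paths comprising $C$. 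Since $a$ is a local source and $b$ a local destination, $C$ is a monotone cycle in the sense defined in the preliminaries, and both arcs out of $a$ along $C$ exist. Take consecutive vertices on one of the two $a$--$b$ paths, say an arc $(p,q)$ of $C$, and use the tournament arc between $p$'s successor candidates to produce a vertex $x \in N^+(p)\cap N^-(q)$ (or between the two neighbors of $a$) through which the edge $(p,q)$ can be rerouted, yielding exactly the situation the inner loop of the algorithm inspects. Thus that $x$ would have been added to $T$, contradicting $T\cap(V(C)\setminus\{a,b\})=\emptyset$.

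The main obstacle I anticipate is making the rerouting step watertight in the directed setting. Unlike the undirected chordal case, where any chord endpoint can be reused in either direction, here I must ensure that the rerouting vertex $x$ is genuinely an \emph{in-neighbor} of one endpoint and an \emph{out-neighbor} of the other, matching the $N^+(a)\cap N^-(b)$ condition tested by the algorithm, and that the resulting $P'$ is still a \emph{simple} directed \stpath (no repeated vertices) that participates in the graph after Reduction Rule~\ref{red:stpath-undirected}. I would handle this by exploiting that $C$ has a local source $a$ with two distinct out-arcs on $C$ and a local destination $b$ with two distinct in-arcs, and by choosing the relevant pair of $C$-vertices so that the tournament arc between them, together with the existing $C$-arcs, furnishes the required $x$ with the correct orientations. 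A careful case analysis, analogous to the two chordal cases (chord at the local source versus elsewhere), should show that in every configuration at least one internal vertex of $C$ satisfies the algorithm's test and is therefore a tracker, completing the contradiction and establishing that $T$ is both a tracking set and minimum.
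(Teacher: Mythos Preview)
Your proposal is correct and matches the paper's own argument: necessity via the rerouting of $(a,b)$ through $x$, and sufficiency by looking at the two out-neighbours $x,y$ of the local source on the monotone cycle and casing on the direction of the tournament arc between them to force one of $x,y$ into $T$. One small fix: Lemma~\ref{lemma:not-trs-cycle} (and the tracking set condition it rests on) are developed in the preliminaries only for undirected graphs, so the paper does not cite that lemma here but instead re-derives the monotone cycle with local source $u$ and local sink $v$ directly from two indistinguishable directed \stpaths; you should do the same rather than invoke the undirected lemma verbatim.
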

\begin{proof}
Algorithm~\ref{alg:tournament} starts by ensuring that each vertex and edge in the input graph $G=(V,E)$ participates in an \stpath of $G$. Next for each edge $e=(a,b)\in E$, if there exists a vertex $x\in (N^+(a)\cap N^-(b))\setminus T$, we check if there exists an \stpath in $G-x$ that contains the edge $e$. Let $P$ such a path in $G-x$. Now consider the path $P'$ that can be obtained by replacing the edge $e$ in $P$ by edges $(a,x)$ and $(b,x)$ along with the vertex $x$. Observe that the vertex sets of $P$ and $P'$ differ only in vertex $x$. Hence, $x$ necessarily belongs to a tracking set for $G$. 

Now we prove that Algorithm~\ref{alg:tournament} indeed returns a tracking set for $G$. Suppose not. Then there exists two \stpaths, say $P_1$ and $P_2$, that contain the same sequence of trackers in $G$. Consider the graph $G'$ induced by $V(P_1)\cup V(P_2)$. Starting from $s$, let $u$ be the last vertex until which $P_1$ and $P_2$ contain the same sequence of vertices. Let $v\in V(P_1)$ be the first vertex on $P_1$ after $u$, such that $v\in V(P_2)$. Let $P_{uv_1}$ be the subpath of $P_1$ lying between the vertices $u$ and $v$, and $P_{uv_2}$ be the subpath of $P_2$ lying between the vertices $u$ and $v$. Observe that $P_{uv_1}$ and $P_{uv_2}$ are vertex disjoint except for their endpoints $u$ and $v$, hence they form a monotone cycle, say $C$. See Figure~\ref{fig:paths-cycle-directed}. Further, there exists a subpath of $P_2$ from $v$ to $t$ that intersects with $C$ only at $v$.

\begin{figure}
\begin{minipage}[h]{0.45\textwidth}
\centering
  {\includegraphics[scale=0.8]{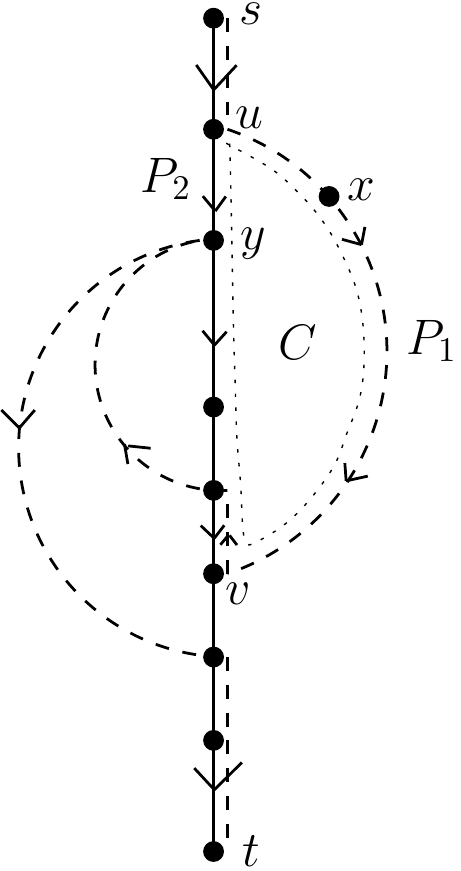}}  
     \caption{Indistinguishable \stpaths in a graph form a cycle (marked in dotted lines)}
     \label{fig:paths-cycle-directed}
\end{minipage}\hspace{10mm}
\begin{minipage}[h]{0.45\textwidth}
\centering
  {\includegraphics[scale=0.75]{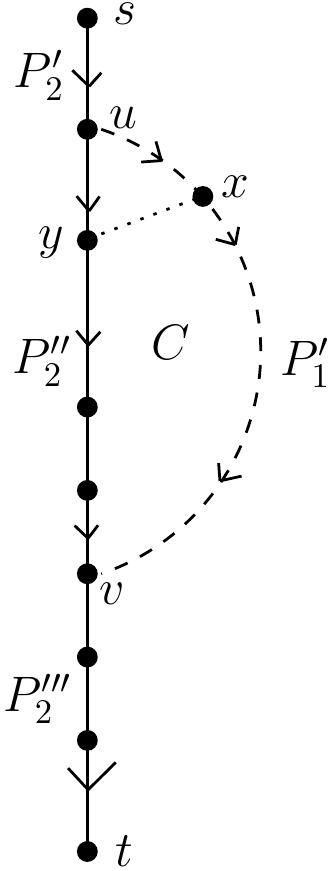}} 
  \vspace{4mm}
    \caption{Monotone cycle}
    \label{fig:monotone-cycle}
\end{minipage}
\end{figure} 

Path $P_1$ is marked in dashed lines, while path $P_2$ is marked in solid lines. Observe the monotone cycle $C$ formed due to paths $P_1$ and $P_2$ with two directed paths from the vertex $u$ to vertex $v$ in $C$. 
Since we consider graphs without any parallel edges, there exists at least one vertex in $V(C)\setminus\{u,v\}$. 

Let $x,y$ be the two out-neighbors of vertex $u$ such that $x,y\in V(C)$ and $x\in V(P_1)$ and $y\in V(P_2)$. Since $G$ is a tournament, there exists an edge between the vertices $x$ and $y$. See Figure~\ref{fig:monotone-cycle}. We use $P_2'$ to denote the subpath of $P_2$ lying between vertices $s$ and $u$, $P_2''$ to denote the subpath of $P_2$ between vertices $u$ and $v$, and $P_2'''$ to denote the subpath of $P_2$ between vertices $v$ and $t$. While $P_1'$ denotes the subpath of $P_1$ between vertices $u$ and $v$. First we consider the case when $(x,y)\in E$. Observe that paths $P_2$ and $P_2'\cdot (u,x) \cdot (x,y)\cdot P_2'-(u,y)$ are two \stpaths that differ only in vertex $x$. Hence $x$ must have been marked as tracker by Algorithm~\ref{alg:tournament}. Next consider the case when $(y,x)\in E$. Now paths $P_2'\cdot P_1'\cdot P_2'''$ and $P_2'\cdot (u,y)\cdot(y,x)\cdot P_1'-(u,x)\cdot P_2'''$ are two \stpaths that differ only at vertex $y$. Hence $y$ must have been marked as a tracker by Algorithm~\ref{alg:tournament}. This contradicts the assumption that no vertex other than $u,v$ in $C$ is marked as a tracker.
Hence Algorithm~\ref{alg:tournament} gives an optimum tracking set for a tournament graph.
\qed
\end{proof}

\begin{lemma}
\label{lemma:tournament-algo-time}
Algorithm~\ref{alg:tournament} runs in time $\Oh(m.n^3)$.
\end{lemma}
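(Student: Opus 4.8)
The plan is to mirror the timing analysis of the chordal case (Lemma~\ref{lemma:chordal-algo-time}), since Algorithm~\ref{alg:tournament} has exactly the same nested-loop structure as Algorithm~\ref{alg:chordal}; the only difference is that the inner disjoint-path test is now carried out in a tournament rather than in an undirected graph. Write $n=|V|$ and $m=|E|$. First I would account for the preprocessing: by Lemma~\ref{lemma:red-stpath-tournament}, applying Reduction Rule~\ref{red:stpath-undirected} on a tournament costs $\Oh(m\cdot n^2)$ time, which I will later absorb into the dominant term.

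Next I would bound the number of iterations of the two nested loops. The outer loop ranges over the $m$ edges $e=(a,b)$. For each such edge, the inner loop ranges over the vertices $x\in (N^+(a)\cap N^-(b))\setminus T$; since this set has at most $n$ elements and can be assembled in $\Oh(n)$ time per edge, the two loops together perform $\Oh(m\cdot n)$ inner iterations, with the set computations contributing only $\Oh(m\cdot n)$ overall.

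The key step is to bound the cost of a single inner iteration, namely the test ``does there exist an \stpath $P$ in $G-x$ with $e\in E(P)$?''. I would observe that, since deleting a single vertex from a tournament again yields a tournament, such a path exists precisely when $G-x$ contains a directed path from $s$ to $a$ and a directed path from $b$ to $t$ that are mutually vertex disjoint, i.e.\ exactly a $2$-vertex-disjoint-paths query with terminal pairs $(s,a)$ and $(b,t)$. Invoking the tournament disjoint-path algorithm of~\cite{chudnovsky}, as already used in the proof of Lemma~\ref{lemma:red-stpath-tournament}, this test runs in $\Oh(n^2)$ time. Multiplying by the $\Oh(m\cdot n)$ iterations gives $\Oh(m\cdot n^3)$ for the loops, and adding the $\Oh(m\cdot n^2)$ preprocessing yields the claimed $\Oh(m\cdot n^3)$ bound.

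The main obstacle, such as it is, lies in this last step: I must argue carefully that the existence of an \stpath through $e$ in $G-x$ is equivalent to the disjoint-paths query, so that $e$ is realized by concatenating the $s$-$a$ path, the edge $(a,b)$, and the $b$-$t$ path without reusing any vertex, and that the cited tournament algorithm indeed attains the $\Oh(n^2)$ running time on the $(n-1)$-vertex tournament $G-x$. Everything else is a routine accounting of nested loops, identical in spirit to Lemma~\ref{lemma:chordal-algo-time}.
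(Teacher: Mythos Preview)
Your proposal is correct and follows essentially the same approach as the paper: bound the preprocessing, count $\Oh(m\cdot n)$ inner iterations, and charge $\Oh(n^2)$ per iteration for the disjoint-paths test in the tournament $G-x$ via~\cite{chudnovsky}. In fact, you are slightly more careful than the paper's own proof, which (inconsistently) cites the $\Oh(n^2)$ undirected bound from~\cite{tr-j} for the preprocessing rather than the $\Oh(m\cdot n^2)$ tournament bound from Lemma~\ref{lemma:red-stpath-tournament}; either way the term is absorbed into $\Oh(m\cdot n^3)$.
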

\begin{proof}
Let $G=(V,E)$ be the input graph, $|V|=n$ and $|E|=m$. From~\cite{tr-j}, it is known that it takes $\Oh(n^2)$ (quadratic) time to apply Reduction Rule~\ref{red:stpath-undirected}. Next for each edge $e=(a,b)\in E$, we consider the set of vertices that are adjacent to both end points $a,b$ of the edge $e$. This takes $\Oh(m.n)$ time. Now for each vertex $x$ that is adjacent to both $a$ and $b$, we check if $e$ participates in some \stpath in the graph $G-x$. Removal of vertex $x$ from $G$ takes constant time. In order to check if $e$ participates in some \stpath in $G-x$, we check if there exists a path between $s$ and $a$, say $P_1$, and a path between $b$ and $t$, say $P_2$, such that $V(P_1)\cap V(P_2)=\emptyset$. This can be done using the algorithm for finding vertex disjoint paths in tournaments shown in~\cite{chudnovsky} in $\Oh(n^2)$ time. Thus the total time taken to run Algorithm~\ref{alg:tournament} is $\Oh(n^2)+\Oh(m.n.n^2)$, i.e. $\Oh(m.n^3)$.
\qed
\end{proof}

From Lemma~\ref{lemma:tournament-algo-correctness} and Lemma~\ref{lemma:tournament-algo-time}, we have the following theorem.

\begin{theorem}
\label{theorem:tournament}
\tp can be solved in polynomial time in tournament graphs.
\end{theorem}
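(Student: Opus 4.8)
The plan is to establish Theorem~\ref{theorem:tournament} by combining the two supporting lemmas, exactly as the excerpt signals: Lemma~\ref{lemma:tournament-algo-correctness} shows that Algorithm~\ref{alg:tournament} outputs an optimum tracking set for any tournament, and Lemma~\ref{lemma:tournament-algo-time} shows that it runs in time $\Oh(m \cdot n^3)$, which is polynomial in the input size. So the theorem is essentially immediate once both lemmas are in hand. My proof proposal therefore focuses on the two genuinely nontrivial ingredients, since the final assembly is routine.

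First I would argue the \emph{lower bound / necessity} direction, that every vertex $x$ marked by the algorithm must lie in \emph{any} tracking set. The key local move is the same as in the chordal case: whenever $x \in N^+(a) \cap N^-(b)$ and there is an \stpath $P$ through the edge $e=(a,b)$ in $G-x$, I can reroute $P$ through $x$ by replacing the edge $(a,b)$ with the two-edge detour $(a,x)\cdot(x,b)$, producing a second \stpath $P'$ whose vertex set differs from that of $P$ only in $x$. Two \stpaths differing in exactly one vertex force that vertex into every tracking set, so $T \subseteq T^\ast$ for any tracking set $T^\ast$. This gives optimality \emph{provided} $T$ is itself a tracking set.

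The main obstacle, and the heart of the argument, is the \emph{sufficiency} direction: proving the returned $T$ actually is a tracking set. I would proceed by contradiction, taking two distinct \stpaths $P_1,P_2$ carrying the same tracker sequence. Letting $u$ be the last common-prefix vertex and $v$ the first vertex where $P_1$ rejoins $P_2$, the two $u$-$v$ subpaths form a cycle $C$ with no internal trackers; because the graph is directed and both subpaths run from $u$ to $v$, this is a \emph{monotone} cycle in the sense defined in Section~\ref{sec:prelim}. The tournament property is what lets me finish: the two out-neighbors $x,y$ of $u$ on $C$ (with $x \in V(P_1)$, $y \in V(P_2)$) are adjacent, and a case split on the orientation of that edge yields, in each case, a pair of \stpaths differing in exactly one internal vertex of $C$ — hence that vertex would have been flagged by the algorithm, contradicting the absence of internal trackers. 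The delicate points here are verifying that the rerouted walks are genuinely \emph{simple} paths (no accidental revisits), that $P_{2}'$, $P_{1}'$, $P_{2}'''$ concatenate validly at their shared endpoints, and that the edge $(x,y)$ or $(y,x)$ does not already appear on the path being modified; these are exactly the routine simplicity checks that make the ``differ in one vertex'' claim airtight.

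Finally I would combine the pieces: by Lemma~\ref{lemma:tournament-algo-correctness} the output is a minimum-cardinality tracking set, and by Lemma~\ref{lemma:tournament-algo-time} it is computed in $\Oh(m \cdot n^3)$ time, where the dominant cost is invoking the tournament vertex-disjoint-paths subroutine of~\cite{chudnovsky} once per (edge, candidate-tracker) pair. Since $m \cdot n^3$ is polynomial in $n$, \tp is solvable in polynomial time on tournaments, which is the theorem. I expect no difficulty in this concluding step; the real work lives entirely in the monotone-cycle case analysis of the correctness lemma, which I would present carefully with the figures referenced in the excerpt.
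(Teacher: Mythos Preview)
Your proposal is correct and follows essentially the same approach as the paper: the theorem is derived directly from Lemma~\ref{lemma:tournament-algo-correctness} and Lemma~\ref{lemma:tournament-algo-time}, and your sketches of those lemmas (necessity via the ``differ in one vertex'' rerouting, sufficiency via the monotone-cycle argument with the tournament edge between the two out-neighbors $x,y$ of $u$ and a case split on its orientation, and the $\Oh(m\cdot n^3)$ running time via the disjoint-paths subroutine of~\cite{chudnovsky}) match the paper's proofs almost line for line. If anything, you are slightly more careful than the paper in flagging the simplicity checks for the rerouted paths.
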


\section{Bounded-Degree Graphs}

In this section, we give an approximation algorithm for \tp. We show that given an undirected graph $G$, there exists a polynomial time algorithm that gives a tracking set for $G$, of the size $2(\delta+1)\cdot OPT$, where $OPT$ is the size of an optimum tracking set for $G$ and $\delta$ is the maximum degree of graph $G$. Approximation algorithms have been studied for restricted versions of \tsp and \tp. Banik et al. gave a $2$-approximate algorithm for \tsp in planar graphs in~\cite{ciac17}. Eppstein et al. gave a $4$-approximate algorithm for \tp in planar graphs in~\cite{ep-planar}. Bil{\`o} et al. gave a $\tilde{O}(\sqrt{n})$-approximate algorithm for \tsp in case of multiple source-destination pairs in~\cite{guido-cubic}.
Next we show that \tp for bounded degree graphs is polynomial time reducible from \textsc{Vertex Cover} for bounded degree graphs. 

\begin{lemma}
\label{lemma:bounded-degree-nphard}
Given an undirected graph $G$ with maximum degree $d$, there exists an $s$-$t$ graph $G'$ with maximum degree $2d$, such that $G$ has a vertex cover of size $k$ if and only if $G'$ has a tracking set for all \stpaths, of size  $k+|E|^2+3|E|-2$.
\end{lemma}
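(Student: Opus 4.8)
The statement is a polynomial-time many-one reduction from \vc on degree-$d$ graphs to \tp, so I would prove it by giving an explicit construction of $G'$ together with size-preserving transformations witnessing the two directions. The design goal is that every minimum tracking set of $G'$ splits into a \emph{forced} part of size exactly $|E|^2+3|E|-2$, common to all solutions, plus a \emph{variable} part sitting on ``vertex-vertices'' $\{x_v : v\in V(G)\}$ that is in bijection with a vertex cover. First I would build $G'$ as follows: keep one vertex $x_v$ per $v\in V(G)$, and for each edge $e=(u,v)\in E(G)$ attach an edge gadget with two connector vertices $a_e,b_e$ joined by two internally vertex-disjoint paths, one routed through $x_u$ and one through $x_v$, so that any \stpath traversing the gadget must ``choose'' which of $x_u,x_v$ it visits. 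Since $x_v$ participates only in the gadgets of edges incident to $v$ and contributes degree $2$ to each, we get $\deg_{G'}(x_v)=2\deg_G(v)\le 2d$, which is exactly where the degree bound $2d$ comes from (and why $d=3$ yields the $\delta=6$ hardness quoted in the introduction). The rest of $G'$ is an auxiliary backbone of small series/parallel cycles wired to $s$ and $t$ whose sole purpose is to make $G'$ a reduced \stgraph (so Reduction Rule~\ref{red:stpath-undirected} does nothing) and to contribute precisely $|E|^2+3|E|-2$ unavoidable trackers.

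For the forward direction I would take a vertex cover $S$ with $|S|=k$ and set $T=\{x_v:v\in S\}\cup F$, where $F$ is the forced-tracker set supplied by the auxiliary gadgets, so $|T|=k+|E|^2+3|E|-2$. To verify that $T$ is a tracking set I would check the \tsc directly: assuming a violation, Lemma~\ref{lemma:not-trs-cycle} yields a cycle $C$ with a local source and local destination and no internal tracker. I would then argue that $C$ must lie inside a single edge gadget, because the backbone cycles are already killed by $F$ and no cycle can span two gadgets, which share only a cut vertex; but inside an edge gadget the only tracker-free cycle requires both $x_u$ and $x_v$ to be untracked, contradicting that $S$ covers $e=(u,v)$.

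For the reverse direction I would first show that the forced part is genuinely unavoidable: each auxiliary gadget is built so that deleting any of its designated trackers exposes two \stpaths differing in exactly one vertex, which by the one-vertex-difference principle (the key idea used throughout the paper) forces that vertex into every tracking set; hence $|T\cap F|=|F|=|E|^2+3|E|-2$ and at most $k$ trackers remain outside $F$. I would then argue those remaining trackers can be assumed to sit on the vertex-vertices, since any tracker placed inside an edge gadget off $x_u,x_v$ can be swapped for $x_u$ or $x_v$ without increasing the size, and that $S=\{v:x_v\in T\}$ must be a vertex cover: if some edge $e=(u,v)$ had neither endpoint selected, the two branches of its gadget give two \stpaths with identical tracker sequences, contradicting that $T$ tracks $G'$.

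\textbf{Main obstacle.} The delicate part is the bookkeeping that pins the forced count to exactly $|E|^2+3|E|-2$ while simultaneously respecting the cap $\deg_{G'}\le 2d$ and keeping $G'$ reduced; the $|E|^2$ term in particular forces the auxiliary backbone to realize quadratically many independent forced cycles without ever raising any vertex's degree above $2d$ or introducing spurious \stpaths that short-circuit an edge gadget's choice. Establishing that every potential violation of the \tsc is confined to a single gadget, so that the clean edge-by-edge vertex-cover correspondence survives, is where the construction has to be engineered most carefully.
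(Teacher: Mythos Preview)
Your plan matches the paper's approach in all essentials: vertex-vertices $x_v$ of degree $2\deg_G(v)$, an edge gadget per $e=(u,v)$ giving two length-$2$ paths $a_e\,x_u\,b_e$ and $a_e\,x_v\,b_e$, a forced block of trackers of size $|E|^2+3|E|-2$, and the vertex-cover correspondence via the one-vertex-difference principle. The piece you flag as the main obstacle is exactly what the paper supplies concretely, and a couple of your structural assumptions about it are slightly off.

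The paper's ``auxiliary backbone'' is two \emph{triangular grids}: $Tg_1$ fans out from $s$ to the $|E|$ connectors $\{a_e\}$, and $Tg_2$ fans in from the $|E|$ connectors $\{b_e\}$ to $t$. Every grid edge lies in a triangle, so the one-vertex-difference argument forces \emph{every} non-terminal grid vertex; this yields the count $|V(Tg_1)\cup V(Tg_2)|-2=|E|^2+3|E|-2$, and grid degrees are at most~$6\le 2d$ (using $d\ge 3$). Two adjustments to your picture: (i)~the connectors $a_e,b_e$ are not cut vertices separating gadgets---they are the outermost rows of the grids, so gadgets are linked both through the grids and through shared $x_v$'s; your ``cycle confined to one gadget'' step still goes through, but for a different reason: once all grid vertices are trackers, any cycle with at most two tracked vertices has at most two grid vertices, and since $\{x_v\}$ is independent the cycle is forced to be a single $4$-cycle $a_e\,x_u\,b_e\,x_v$. (ii)~Because the two branches have length exactly~$2$, the only non-forced vertices are the $x_v$'s, so no swapping argument is needed in the reverse direction. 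For the forward direction the paper argues directly (same tracker sequence $\Rightarrow$ same grid-vertex sequence $\Rightarrow$ the paths diverge at some $a_e$ into $x_u$ vs.\ $x_v$, one of which is covered), but your route via Lemma~\ref{lemma:not-trs-cycle} works equally well.
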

\begin{proof}
Let $G$ be and undirected graph with maximum degree $d$. For reference, let $G$ be the graph in Figure~\ref{fig:vc}. 

\begin{figure}[h]
\begin{minipage}[t]{0.25\textwidth}
\centering
  {\includegraphics[scale=0.5]{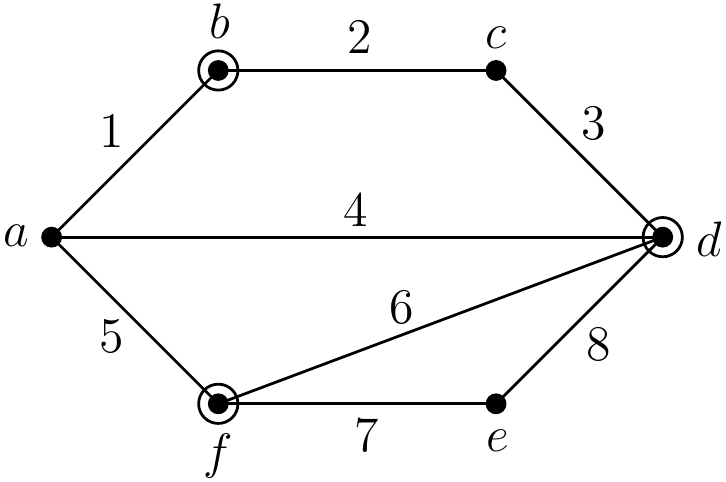}}  
     \caption{Depiction of an undirected graph $G$ with maximum degree $d$}
     \label{fig:vc}
\end{minipage}\hspace{8mm}
\begin{minipage}[h]{0.75\textwidth}
\centering
  {\includegraphics[scale=0.75]{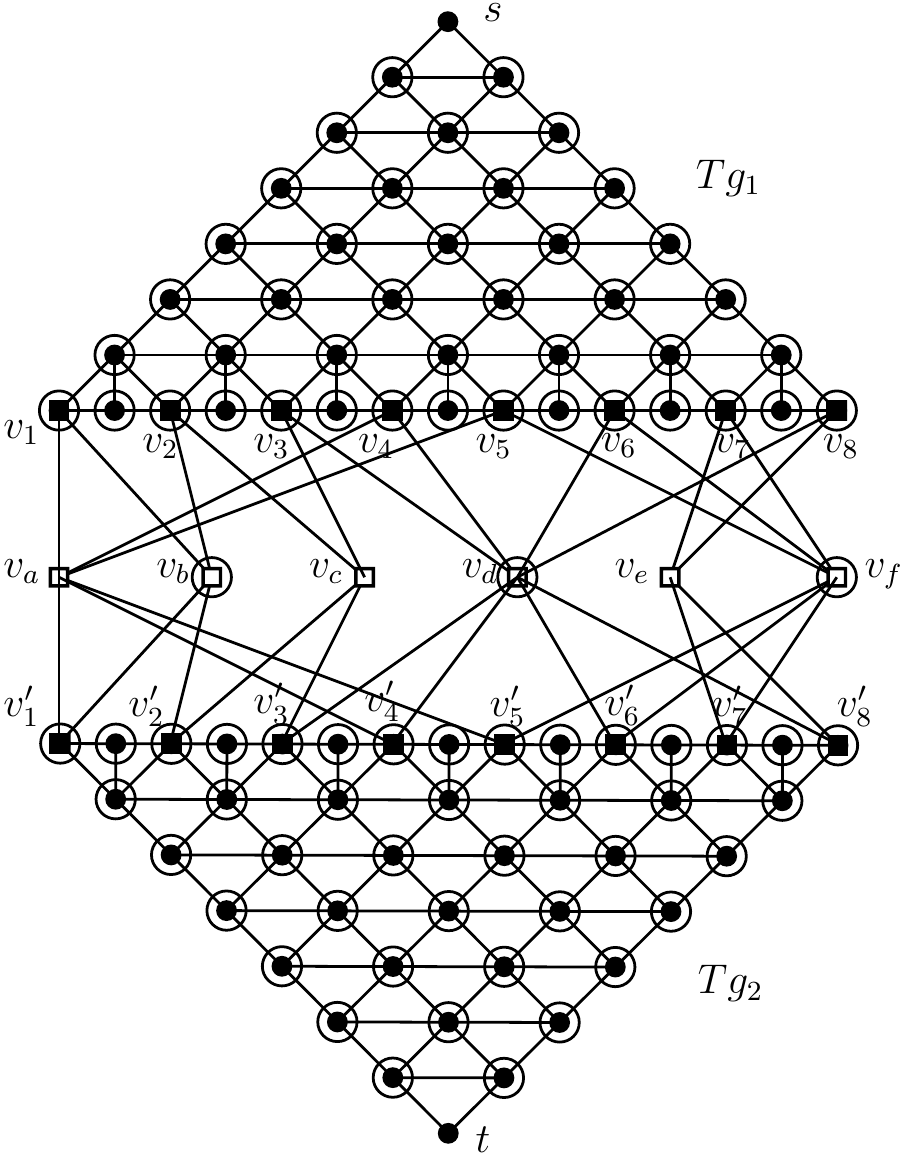}} 
  \vspace{4mm}
    \caption{Depiction of graph $G'$ mentioned in Lemma~\ref{lemma:bounded-degree-nphard}}
    \label{fig:bounded-deg-nphard}
\end{minipage}
\end{figure}

We create the graph $G'$ as follows. For each vertex $a\in V(G)$, we introduce a vertex $v_a$ in $V(G')$, and we call these set of vertices in $G'$ as $V_v$. For each edge $i\in E(G)$, we introduce two vertices $v_i, v_i'$ in $E(G')$, and we call the set of vertices $v_i$ as $V_e$, and the set of vertices $v_i'$ as $V_e'$. 
The adjacencies between $V_v$ and $V_e,V_e'$ are introduced as follows. If an edge $i$ is incident on vertices $a,b$ in $G$, then we add edges between the corresponding vertices $v_i,v_i'\in V_e,V_e'$ and the vertices $v_a,v_b\in V_v$ in $G'$. Next, we add the source and destination vertices $s$ and $t$ in $G'$. We then create a triangular grid $Tg_1$ between $s$ and the vertices in $V_e$, and another triangular grid between the vertices in $V_e'$ and $t$. See Figure~\ref{fig:bounded-deg-nphard}. The vertices of $V_v$ are marked with blank boxes, while the ones from $V_e\cup V_e'$ are marked with solid boxes. The circled vertices form a tracking set.
Observe that the maximum degree of vertices in $Tg_1$ and $Tg_2$, including the vertices in $V_e\cup V_e'$, is $6$. The maximum degree of vertices in $V_v$ is at most $2d$.

Now we prove that there exists a vertex cover of size $k$ in $G$ if and only if there exists a tracking set in $G'$ of size $k+|E|^2+3|E|-2$.
First consider the case when $G$ has a vertex cover $V_c$ of size $k$. We now prove that there exists a tracking set of size $k+|E|^2+3|E|-2$ in $G'$. We mark the vertices in $G'$ corresponding to $V_c$ as trackers. In addition we mark all the vertices in $Tg_1$ and $Tg_2$ (except $s$ and $t$) as trackers. Now the size of tracking set $T$ in $G'$ is $k+|E|^2+3|E|-2$. We claim that $T$ is a valid tracking set for $G'$. Suppose not. Then there exists two distinct \stpaths, say $P_1,P_2$ in $G'$, such that the sequence of trackers in $P_1$ is same as that in $P_2$. Observe that two distinct subpaths (subpaths of some \stpaths) contained in $Tg_1$ ($Tg_2$) cannot have the same sequence of trackers from $Tg_1-\{s\}$ ($Tg_2-\{t\}$). Since all vertices in $Tg_1,Tg_2$ are marked as trackers, this implies that $P_1,P_2$ contain the same sequence of vertices from $Tg_1$ and $Tg_2$, and they necessarily differ in vertices from $V_v$. Let $x,y\in V_v$ be the vertices that distinguish $P_1$ and $P_2$, and $x\in V(P_1)$ and $y\in V(P_2)$. Since $P_1$ and $P_2$ cant differ in their vertex set from $Tg_1$ and $Tg_2$, the vertex preceding $x,y$ has to be common in both $P_1,P_2$. Without loss of generality, we assume that the $z$ is the vertex preceding $x,y$, and $z\in V(Tg_1)$. This implies that $z\in V_e$. Note that $z$ corresponds to an edge in $G$. Since we marked the vertices corresponding to $V_c$ as trackers in $G'$, at least one of the neighbors of $z$ in $V_v$ is necessarily a tracker. Thus either $x$ or $y$ is necessarily a tracker. This contradicts the assumption that $P_1$ and $P_2$ have the same sequence of trackers.

Now we consider the case when $G'$ has a tracking set $T$ of size $k+|E|^2+3|E|-2$. We claim that there exists a vertex cover of size $k$ in $G$. Suppose not. Consider the triangular grid subgraphs $Tg_1$ and $Tg_2$. Observe that for each edge $(a,b)$ in $Tg_1$, there exists a vertex $c\in N(a)\cap N(b)$, and there exists an \stpath, say $P_1$, that passed through $(a,b)$ in $G-c$, such that we can replace edge $(a,b)$ in $P_1$ by edges $(a,c)$,$(c,b)$ to form another \stpath, say $P_2$. Observe that $P_1$ and $P_2$ differ in only one vertex i.e. $c$. Hence $c$ is necessarily a tracker. The same holds true for each edge in $Tg_2$. Thus all vertices in $V(Tg_1)\cup V(Tg_2) \setminus \{s,t\}$ are necessarily trackers and hence belong to $T$. Since $|V(Tg_1)\cup V(Tg_2) \setminus \{s,t\}|=|E|^2+3|E|-2$, the remaining $k$ trackers in $T$ are vertices from $V_v$. Let $V_t$ be the set of vertices in $V_v$ that have been marked as trackers, i.e. $V_t=V_v\cap T$. Note that $|V_t|=k$. We denote the set of vertices in $G$ that correspond to vertices in $V_t$ as $V_c$. We claim that $V_c$ forms a vertex cover for $G$. Suppose not. Then there exists an edge, say $(a,b)$ in $G$, such none of its end points $a,b$ belong to $V_c$. This implies that the vertices in $V_v$ that correspond to $a$ and $b$, say $v_a,v_B$, are not trackers in $G'$. Due to the construction of $G'$, there exists a pair of vertices $v_i\in V_e$ and $v_i'\in V_e'$ ($v_i,v_i'$ correspond to the edge $(a,b)$ in $G$) such that $v_a$ and $v_b$ are adjacent to both $v_i$ and $v_i'$.

Observe that for each pair of vertices $v_i,v_i'$, where $v_i\in V_e$ and $v_i'\in V_e'$, there exists two vertices in $V_v$ (the vertices in $V(G)$ that correspond to the endpoints of the edge $i$ in $G$) that are adjacent to both $v_i$ and $v_i'$. Thus for each pair of vertices $v_i,v_i'$, there exists two paths between them passing through two distinct vertices in $V_v$. Further, there exists a path from $s$ to $v_i$ that is completely contained in $Tg_1$, and there exists a path from $v_i'$ to $t$ that is completely contained in $Tg_2$. Thus at least one of the vertices from $V_v$ that are adjacent to $v_i,v_i'$, must necessarily be a tracker. This contradicts the fact that neither $v_a$ nor $v_b$ is a tracker in $G'$. This completes the proof.
\qed
\end{proof}

Since \textsc{Vertex Cover} is known to be \textsc{NP}-hard for graphs with maximum degree $d$ ($d\geq 3$)~\cite{vc-cubic}, due to Lemma~\ref{lemma:bounded-degree-nphard} we have the following corollary.

\begin{corollary}
\tp is \textsc{NP}-hard for graphs with maximum degree $\delta\geq6$.
\end{corollary}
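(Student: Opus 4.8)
The plan is to derive the corollary directly from the reduction of Lemma~\ref{lemma:bounded-degree-nphard}, instantiated at the smallest degree for which \vc is already known to be hard. By \cite{vc-cubic}, \vc is \textsc{NP}-hard on graphs of maximum degree $3$ (cubic graphs), so I would set $d=3$ in the lemma. The lemma then produces, in polynomial time, an \stgraph $G'$ together with a target size $k+|E|^2+3|E|-2$, such that $G$ has a vertex cover of size $k$ if and only if $G'$ has a tracking set of that size. The construction introduces $\Oh(|E|^2)$ vertices and edges, so it is computable in polynomial time, and the target value is a polynomially computable integer; together with the stated equivalence this is a valid polynomial-time many-one reduction from \vc to \tp.

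The first thing to check is the degree bound. In $G'$ the vertices of the triangular grids $Tg_1, Tg_2$ (including those of $V_e\cup V_e'$) have degree at most $6$, while each vertex of $V_v$ has degree at most $2d$. Plugging in $d=3$ gives $\max(2d,6)=\max(6,6)=6$, so every \tp instance produced by the reduction has maximum degree $6$. Combined with the \textsc{NP}-hardness of \vc on cubic graphs, this shows \tp is \textsc{NP}-hard on graphs of maximum degree (at most) $6$.

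To obtain the full statement for every $\delta\ge 6$, I would use that the instance class is monotone in the degree bound: any graph of maximum degree at most $6$ is also a graph of maximum degree at most $\delta$ for any $\delta\ge 6$. Thus the hard instances produced above are legitimate inputs to \tp restricted to maximum degree $\delta$, and the same reduction witnesses \textsc{NP}-hardness for each such $\delta$, without needing to re-run the construction at each value.

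The argument is essentially routine once Lemma~\ref{lemma:bounded-degree-nphard} is in hand; the only points requiring care are the arithmetic check that $d=3$ forces the grid degree $6$ to dominate the $2d$ bound (so the reduction genuinely lands at maximum degree $6$ and not higher), and the observation that hardness at a fixed degree bound lifts to all larger bounds by subclass inclusion rather than demanding a fresh gadget for each odd value of $\delta$ that the $2d$ construction alone cannot hit directly.
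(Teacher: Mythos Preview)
Your proposal is correct and matches the paper's own reasoning, which simply invokes Lemma~\ref{lemma:bounded-degree-nphard} together with the \textsc{NP}-hardness of \vc on graphs of maximum degree $d\ge 3$ \cite{vc-cubic}. You are in fact more explicit than the paper, spelling out both the arithmetic check $\max(2d,6)=6$ at $d=3$ and the subclass-inclusion argument needed to cover all $\delta\ge 6$ (including odd values).
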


\begin{algorithm}[ht]
\caption{Finding a $2(\delta+1)$-approximate tracking set for undirected graphs with maximum degree $\delta$.}
\label{alg:approximate}


\KwIn{Undirected graph $G=(V,E)$ such that $deg(x)\leq\delta$, $\forall x\in V$, and vertices $s,t\in V$.}
\KwOut{Tracking Set $T\subseteq V$ for $G$.}

\SetAlgoLined

\BlankLine
Apply Reduction Rule~\ref{red:stpath-undirected}\;
Find a $2$-approximate feedback vertex set $S$ for $G$\;
Set $T=S$\;
\ForEach{$v\in S$}{ 
  \ForEach{$x\in N(v)$}
  	{$T=T\cup\{x\}$\;
  }
  }
    Return $T$\;

\end{algorithm}

Algorithm~\ref{alg:approximate} gives a procedure to find a $2(\delta+1)$-approximate tracking set for undirected graphs with maximum degree $\delta$. We prove its correctness in the following lemma.

\begin{lemma}
\label{lemma:approximate-algo-correctness}
Algorithm~\ref{alg:approximate} gives a $2(\delta+1)$-approximate tracking set for an undirected graph.
\end{lemma}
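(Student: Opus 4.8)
The plan is to establish separately that the set $T$ returned by Algorithm~\ref{alg:approximate} is a valid tracking set and that $|T| \le 2(\delta+1)\cdot OPT$, where $OPT$ denotes the size of an optimum tracking set. The first observation is that the algorithm outputs $T = S \cup N(S)$, where $S$ is the $2$-approximate feedback vertex set computed in line~2 and $N(S) = \bigcup_{v \in S} N(v)$.

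For validity, I would argue by contradiction using Lemma~\ref{lemma:not-trs-cycle}. Suppose $T$ is not a tracking set. Then there is a cycle $C$ in $G$ with a local source $a$ and a local destination $b$ such that $T \cap (V(C) \setminus \{a,b\}) = \emptyset$. Since $S$ is a feedback vertex set, every cycle of $G$ meets $S$, so pick $w \in S \cap V(C)$. Because $S \subseteq T$, the vertex $w$ is a tracker lying on $C$, and the condition $T \cap (V(C)\setminus\{a,b\}) = \emptyset$ forces $w \in \{a,b\}$. As $G$ is simple and $|V(C)| \ge 3$, the vertex $w$ has two distinct neighbors on $C$, each different from $w$; since they are distinct, at most one of them is the remaining vertex of $\{a,b\}$, so some neighbor $x$ of $w$ on $C$ satisfies $x \in V(C)\setminus\{a,b\}$. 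But then $x \in N(w) \subseteq N(S) \subseteq T$, contradicting $T \cap (V(C)\setminus\{a,b\}) = \emptyset$. Hence $T$ is a tracking set.

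For the approximation ratio, I would invoke the fact from~\cite{tr-j} that every tracking set is also a feedback vertex set, so $OPT$ is at least the size of a minimum feedback vertex set of $G$; as $S$ is a $2$-approximation, $|S| \le 2\cdot OPT$. Each vertex of $S$ has at most $\delta$ neighbors, hence $|T| = |S \cup N(S)| \le |S| + \sum_{v \in S}|N(v)| \le (\delta+1)|S| \le 2(\delta+1)\cdot OPT$.

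The main obstacle is the validity argument, specifically guaranteeing that the neighbor $x$ of $w$ on $C$ that witnesses the contradiction truly lies in $V(C)\setminus\{a,b\}$ rather than coinciding with $a$ or $b$; this is resolved by the simplicity of $G$, which ensures $w$ has two distinct cycle-neighbors, together with $a \ne b$, so at most one of these neighbors can be forbidden. The counting bound is routine once the tracking-set-implies-feedback-vertex-set fact is applied.
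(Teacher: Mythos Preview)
Your proposal is correct and follows essentially the same approach as the paper: argue validity by contradiction via Lemma~\ref{lemma:not-trs-cycle}, use that $S$ is a feedback vertex set to find a vertex $w\in S\cap V(C)$, and then exploit that $N(w)\subseteq T$ to produce a tracker in $V(C)\setminus\{a,b\}$; the size bound is obtained identically from the tracking-set-is-FVS fact in~\cite{tr-j}. Your case analysis for why some cycle-neighbor of $w$ avoids $\{a,b\}$ is in fact spelled out more carefully than in the paper's version.
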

\begin{proof}
Algorithm~\ref{alg:approximate} starts by ensuring that each vertex and edge in the input graph $G$ participates in an \stpath of $G$ by applying Reduction Rule~\ref{red:stpath-undirected}. 

Next we claim that Algorithm~\ref{alg:approximate} indeed returns an optimal tracking set $T$ for $G$. Suppose not. Then $T$ is not a tracking set for $G$. Due to Lemma~\ref{lemma:not-trs-cycle}, there exists a cycle $C$ in $G$ with a local source $u$ and a local destination $v$, such that $V(C)\setminus\{u,v\}$ does not contain any trackers. See Figure~\ref{fig:paths-cycle}.

Path $P_1$ is marked in solid lines, while path $P_2$ is marked in dashed lines. Observe the cycle $C$ formed due to paths $P_1$ and $P_2$. Since $P_1$ and $P_2$ contain the same sequence of trackers, no vertex in $V(C)\setminus\{u,v\}$ can be a tracker. Since we consider graphs without any parallel edges, there exists at least one vertex in $V(C)\setminus\{u,v\}$. 
Note that Algorithm~\ref{alg:approximate} includes a $2$-approximate feedback vertex set, $S$, for $G$ in $T$. Thus at least one vertex from $C$ belongs to $T$. Note that it is possible that vertices $u\in S$ and/or $v\in S$, and thus $u$ or $v$ may have been included in $T$. But the vertices $u,v$ do not help distinguish between paths $P_1$ and $P_2$. However, observe that Algorithm~\ref{alg:approximate} also includes all neighbors of the vertices in $S$ into $T$. Further each vertex in $V(C)$ has at least two of its neighbors in $V(C)$. Thus at least one vertex in $V(C)$, other than $u$ and $v$, will have been necessarily included in $T$. This violates the claim that no vertex other than $u$ or $v$ belongs to $T$, contradicting the assumption that $T$ is not a tracking set for $G$.

Next we explain the approximation ratio $2(\delta+1)$. From~\cite{tr-j}, it is known that each tracking set is also a feedback vertex set. Hence, for a graph $G$, the size of a minimum FVS serves as a lower bound for the size of an optimum tracking set for $G$. Thus when Algorithm~\ref{alg:approximate} includes a $2$-approximate FVS $S$, into the tracking set $T$, the size of $T$ is at most $2\cdot OPT$, where $OPT$ is the size of an optimum tracking set for $G$. Further, for each vertex in $S$, all of its neighbors are also included into the tracking set $T$. Since the maximum degree of $G$ is upper bounded by $\delta$, for each vertex in $S$, additional $\delta$ vertices are included in $T$. Thus the size of $T$ is at most $2(\delta+1)\cdot OPT$.
\qed
\end{proof}

Next we prove that Algorithm~\ref{alg:approximate} runs in polynomial time.

\begin{lemma}
\label{lemma:approximate-algo-time}
Algorithm~\ref{alg:approximate} runs in time $\Oh(n^2)$.
\end{lemma}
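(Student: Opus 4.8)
The plan is to bound the running time of Algorithm~\ref{alg:approximate} by analyzing its three phases separately. First I would account for the preprocessing step: applying Reduction Rule~\ref{red:stpath-undirected}, which by the result cited from~\cite{tr-j} and used in Lemma~\ref{lemma:chordal-algo-time} takes $\Oh(n^2)$ time on an undirected graph. Next I would bound the cost of computing a $2$-approximate feedback vertex set $S$ for $G$. The key observation here is that well-known $2$-approximation algorithms for \fvs (for instance via the local-ratio or primal-dual methods) run in time $\Oh(n^2)$ (or $\Oh(|V|\cdot|E|)$, which for a bounded-degree graph is $\Oh(n^2)$ since $|E|=\Oh(\delta n)=\Oh(n)$). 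So this phase also fits within the $\Oh(n^2)$ budget.

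For the final phase, I would analyze the nested loop that augments $T$ by adding, for every $v\in S$, each neighbor $x\in N(v)$. The natural bound is that this double loop performs $\sum_{v\in S}|N(v)| = \sum_{v\in S} deg(v)$ insertions, and since $deg(v)\leq\delta$ for every vertex and $\delta$ is a constant for bounded-degree graphs, the total work is $\Oh(\delta\cdot|S|)=\Oh(\delta\cdot n)=\Oh(n)$. Each set-union operation costs constant time (or is absorbed into the $\Oh(n^2)$ total), so this phase is dominated by the earlier steps. Summing the three contributions gives $\Oh(n^2)+\Oh(n^2)+\Oh(n)=\Oh(n^2)$, which is the claimed bound.

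The main point of care — and the step I expect to require the most justification — is the running time of the $2$-approximate \fvs computation, since that is where the heavy lifting happens and where one must invoke a specific algorithm whose complexity is $\Oh(n^2)$ on bounded-degree (hence sparse) graphs. Everything else is routine bookkeeping: the reduction rule's cost is already established, and the neighbor-augmentation loop is cheap precisely because the maximum degree $\delta$ is a constant, so the number of edges is linear in $n$ and the loop touches each edge incident to $S$ only a bounded number of times. I would therefore structure the proof as: state the $\Oh(n^2)$ cost of the reduction rule, cite the $\Oh(n^2)$ bound for the $2$-approximate FVS (noting $|E|=\Oh(n)$ for bounded degree), bound the augmentation loop by $\Oh(\delta n)=\Oh(n)$, and conclude that the overall running time is $\Oh(n^2)$.
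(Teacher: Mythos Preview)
Your proposal is correct and follows essentially the same three-phase decomposition as the paper's proof: $\Oh(n^2)$ for Reduction Rule~\ref{red:stpath-undirected}, an $\Oh(n^2)$ bound for the $2$-approximate FVS computation (the paper cites~\cite{twofvs} with runtime $\Oh(\min\{|E|\log|V|,|V|^2\})$), and a polynomial bound for the neighbor-augmentation loop. The only cosmetic difference is that the paper bounds the final loop loosely at $\Oh(n^2)$ without invoking the degree bound, whereas you give the tighter $\Oh(\delta n)=\Oh(n)$; both fit within the claimed $\Oh(n^2)$ total.
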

\begin{proof}
Algorithm~\ref{alg:approximate} starts by applying Reduction Rule~\ref{red:stpath-undirected} that can be applied in quadratic time. Next we find a $2$-approximate feedback vertex set $S$ for the input graph, using the algorithm given in~\cite{twofvs} in $\Oh(min\{|E|\log|V|,|V|^2\})$ time. We include $S$ in tracking set $T$. Next, for each vertex $v\in S$, we add $N(v)$ to $T$. This step takes $\Oh(n^2)$ time. Thus the overall time taken is $\Oh(n^2)+\Oh(min\{|E|\log|V|,|V|^2\})+\Oh(n^2)$. Hence the algorithm runs in total $\Oh(n^2)$ time.\qed
\end{proof}

From Lemma~\ref{lemma:approximate-algo-correctness} and Lemma~\ref{lemma:approximate-algo-time}, we have the following theorem.

\begin{theorem}
\label{theorem:approximate}
For an undirected graph $G$ on $n$ vertices such that the maximum degree of vertices in $G$ is $\delta$, there exists an $\Oh(n^2)$ algorithm that finds a $2(\delta+1)$-approximate tracking set for $G$.
\end{theorem}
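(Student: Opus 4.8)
The plan is to exhibit Algorithm~\ref{alg:approximate} as the required algorithm and then verify the two assertions of the theorem separately: that its output is a valid tracking set of size at most $2(\delta+1)\cdot OPT$, and that it terminates in $\Oh(n^2)$ time. These are exactly the statements of Lemma~\ref{lemma:approximate-algo-correctness} and Lemma~\ref{lemma:approximate-algo-time}, so the theorem follows by combining them; the substance therefore lies in those two lemmas, whose arguments I would organise as follows.

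For correctness I would proceed by contradiction. Suppose the returned set $T$ is not a tracking set. By Lemma~\ref{lemma:not-trs-cycle}, any such failure is witnessed by a cycle $C$ with a local source $u$ and a local destination $v$ for which $V(C)\setminus\{u,v\}$ is tracker-free. The first key point is that the $2$-approximate feedback vertex set $S$ computed by the algorithm must meet $C$, since $C$ is a cycle and $S$ destroys all cycles; pick $w\in V(C)\cap S$. The delicate part is that $w$ could be $u$ or $v$, which do not help distinguish the two paths forming $C$; this is precisely why the algorithm additionally inserts every neighbor of every vertex of $S$ into $T$. Because each vertex of $C$ has at least two neighbors on $C$, the on-cycle neighbors of $w$ get marked, producing a tracker in $V(C)\setminus\{u,v\}$ and contradicting tracker-freeness. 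For the size bound I would invoke the result of~\cite{tr-j} that every tracking set contains a feedback vertex set, so the minimum FVS lower-bounds $OPT$ and hence $|S|\le 2\cdot OPT$; the neighbor expansion adds at most $\delta$ vertices per element of $S$, giving $|T|\le(\delta+1)|S|\le 2(\delta+1)\cdot OPT$.

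For the running time I would simply tally the three phases: applying Reduction Rule~\ref{red:stpath-undirected} costs $\Oh(n^2)$, computing a $2$-approximate FVS costs $\Oh(\min\{|E|\log|V|,|V|^2\})$ via~\cite{twofvs}, and the final loop collecting the neighbors of $S$ costs $\Oh(n^2)$; the total is dominated by $\Oh(n^2)$.

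The main obstacle I anticipate is the correctness argument, and specifically the step that rules out the situation in which $S$ meets the bad cycle $C$ only at its local source or destination. Merely including an FVS guarantees a tracker somewhere on $C$, but not a tracker interior to $C$ in the sense demanded by the tracking set condition; the neighbor-expansion device, together with the observation that every cycle vertex has two cycle-neighbors, is exactly what closes this gap, and verifying it carefully is the crux of the whole proof.
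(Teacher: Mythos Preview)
Your proposal is correct and follows essentially the same approach as the paper: the theorem is derived directly from Lemma~\ref{lemma:approximate-algo-correctness} and Lemma~\ref{lemma:approximate-algo-time}, and your sketches of those lemmas match the paper's arguments, including the key observation that the neighbor-expansion step is needed precisely to handle the case where the FVS $S$ meets the bad cycle $C$ only at its local source or destination.
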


The approximation ratio for our algorithm can be improved slightly by using the improved approximation bounds known for FVS in bounded degree graphs~\cite{bafna-fvs}.

\section{Reconstructing Paths using Trackers}
\label{sec:path-recon}
In real-world applications, it might be required to identify the \stpath which corresponds to a given sequence of trackers. Banik et al.~\cite{ciac17} gave a polynomial time algorithm to reconstruct the \sstpath corresponding to a subset of trackers, given a tracking set for \sstpaths. Here we give an algorithm that works for all \stpaths.
Given a graph $G$ and a sequence of trackers $\pi$, such that $V(\pi)\subseteq T$, where $T$ is a tracking set $T$ of constant size for $G$, the algorithm identifies the unique \stpath in $G$ that corresponds to $\pi$. Our algorithm works for both undirected graphs as well as tournaments.

\begin{lemma}
\label{lemma:path-reconstruction}
Given a graph $G$, and tracking set $T$ for $G$, and a sequence of trackers $\pi$, the unique \stpath corresponding to $\pi$ can be found in polynomial time.
\end{lemma}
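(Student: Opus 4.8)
The plan is to reduce path reconstruction to a bounded number of vertex-disjoint path queries. Write the input sequence as $\pi=(\pi_1,\dots,\pi_k)$ and set $\pi_0=s$, $\pi_{k+1}=t$. Since $T$ has constant size, $k\le |T|$ is a constant. The \stpath we are looking for, call it $P$, must encounter exactly the trackers $\pi_1,\dots,\pi_k$ and in this order; consequently $P$ decomposes into $k+1$ consecutive subpaths $P_0,\dots,P_k$, where $P_i$ runs from $\pi_i$ to $\pi_{i+1}$, its internal vertices contain no tracker (and no other landmark), and the subpaths are pairwise internally vertex-disjoint. So reconstructing $P$ amounts to simultaneously routing the $k+1$ pairs $(\pi_i,\pi_{i+1})$ through tracker-free, mutually disjoint paths.

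First I would delete from $G$ every tracker except $\pi_1,\dots,\pi_k$ (keeping $s,t$), obtaining $G'$; any valid set of segments lives entirely in $G'$, which enforces the tracker-freeness of the interiors. The remaining difficulty is that consecutive terminal pairs share a vertex ($\pi_i$ is an endpoint of both $P_{i-1}$ and $P_i$), whereas the disjoint-paths subroutine expects disjoint terminal pairs and returns only internally disjoint paths. To get around this I would split each intermediate landmark $\pi_i$ ($1\le i\le k$) into two vertices $\pi_i^-,\pi_i^+$ joined by an edge, both copies inheriting all edges incident to $\pi_i$ in $G'$, and pose the $k+1$ terminal pairs $(s,\pi_1^-),(\pi_1^+,\pi_2^-),\dots,(\pi_k^+,t)$. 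All $2(k+1)$ terminals are now distinct, so internal disjointness coincides with full disjointness. I would then call the fixed-number vertex-disjoint-paths algorithm for undirected graphs~\cite{KAWARABAYASHI2012424} (respectively, for tournaments~\cite{chudnovsky}) with these pairs; because $k+1$ is constant it runs in polynomial time. Reinserting the connector edges and contracting each pair $\pi_i^-,\pi_i^+$ back to $\pi_i$ turns the returned linkage into a single \stpath of $G$ whose tracker sequence is exactly $\pi$.

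Two things remain to be argued. For existence, since $\pi$ is by hypothesis the tracker sequence of some \stpath, that path certifies a feasible solution to the disjoint-paths instance, so the subroutine succeeds. For correctness and uniqueness, I would invoke the \tsc together with Lemma~\ref{lemma:not-trs-cycle}: if two distinct \stpaths shared the sequence $\pi$, they would form a tracker-free cycle between a local source and a local destination, contradicting that $T$ is a tracking set. Hence any \stpath that visits $\pi_1,\dots,\pi_k$ in order with tracker-free interiors is \emph{the} unique such path, so whichever linkage the subroutine returns reassembles to it.

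The main obstacle is the shared-endpoint issue in the disjoint-paths formulation, and making sure the reassembled object is a genuine simple path: one must check that no segment reuses a landmark, $s$, or $t$ meant for another segment, which is exactly what the vertex-splitting (turning every landmark copy into a terminal) guarantees, and that the two $G$-neighbors of $\pi_i$ used by the incoming and outgoing segments are distinct, which follows from disjointness. Minor edge cases ($k=0$, or $s$/$t$ themselves being trackers) are handled by the same routing with the appropriate terminals and do not affect the running time, since the number of requested disjoint paths stays bounded by $|T|+1=\Oh(1)$.
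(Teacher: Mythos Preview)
Your proposal is correct and follows essentially the same approach as the paper: delete the trackers outside $V(\pi)$, duplicate each landmark in $\pi$ to resolve the shared-endpoint issue, reduce to a bounded number of vertex-disjoint path queries solved via~\cite{KAWARABAYASHI2012424} (undirected) or~\cite{chudnovsky} (tournaments), and invoke the tracking-set property for uniqueness. Your write-up is in fact more careful than the paper's about why the reassembled walk is simple and about the edge cases.
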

\begin{proof}
Let $V(\pi)$ denote the vertices in the sequence $\pi$ and let $|V(\pi)|=k$. Let $P$ be the path that we need to find, i.e. the unique \stpath in $G$ that corresponds to $\pi$.
Let $\pi=(s,v_1,v_2,\dots,t)$ be the sequence of trackers received as part of the input. Let $S$ be the set of pairs of vertices formed from consecutive vertices in $\pi$, i.e. $S=\{ \{s,v_1\},\{v_1,v_2\},\dots,\{v_k,t\} \}$. Since $\pi$ corresponds to $P$, $V(P)$ should not contain any trackers from $T$, other than those in $\pi$. Now we need to find the path that passes through the sequence of vertices in $\pi$. In order to do so we find the vertex disjoint paths between $v_i$ and $v_{i+1}$, where $v_0=s$ and $v_{k+1}=t$. The sub paths between the pairs of vertices in $\pi$ should be vertex disjoint. We create a copy $v_i'$ for each vertex $v_i$ in $\pi$, and introduce and edge between $v_i'$ and each vertex in $N(v_i)$ in the graph $G$. We create a new set $S'=\{ \{s,v_1\},\{v_1',v_2\},\{v_2',v_3\}\dots,\{v_{k-1},v_k\},\{v_k',t\} \}$ and $V(S')$ be the set of all vertices in $S'$. Consider the graph $G'=G- (T\setminus V(S'))$. If $G$ is an undirected graph, then using the algorithm for disjoint paths in undirected graphs from~\cite{KAWARABAYASHI2012424}, find the vertex disjoint paths between the pairs of vertices in $S'$, in the graph $G'$. If $G$ is a tournament graphs, then using the algorithm for disjoint paths in tournaments from~\cite{chudnovsky}, find the vertex disjoint paths between the pairs of vertices in $S'$, in the graph $G'$. Since disjoint path problem can be solved in polynomial time for undirected graphs and tournaments~\cite{KAWARABAYASHI2012424},\cite{chudnovsky}, we can perform this step in polynomial time. Observe that the sequence of these vertex disjoint paths will form an \stpath in $G'$, which will also be an \stpath in $G$. Next we prove that the path found will be a unique \stpath. Suppose not. Then there exists two \stpaths in $G$, that contain the sequence of trackers $\pi$. However, this contradicts the assumption that $T$ is a tracking set for $G$. Observe that if the paths between pairs of vertices $v_i$ and $v_{i+1}$ are not vertex disjoint, this results in violation of tracking set condition, as there are two vertex disjoint paths between a pair of vertices that have disjoint paths to $s$ and $t$ themselves. This contradicts the assumption that $T$ is a tracking set for $G$.
\qed
\end{proof}

\section{Tracking Edge Set for Undirected Graphs}

In this section we study the problem of identifying \stpaths in an undirected edge weighted graph using the edges of the graph. For a graph $G$, we define a \textit{tracking edge set} as the set of edges whose intersection with each \stpath results in a unique sequence of edges. Here we allow parallel edges in the input graph. We formally define the problem of tracking paths using edges as follows.

\defproblem{\tpe $(G,s,t)$}{An undirected edge weighted graph $G=(V,E)$ with terminal vertices $s$ and $t$.}
{Find a minimum weight tracking edge set  $T\subseteq E$ for $G$.}
\medskip

We start by first applying Reduction Rule~\ref{red:stpath-undirected}, which ensures that each vertex and edge in the graph participates in some \stpath. Next we prove that each cycle in the reduced graph needs an edge as a tracker.

\begin{lemma}
\label{lemma:each-cycle-needs-an-edge}
For a reduced graph $G=(V,E)$, if $T\subseteq E$ is a tracking edge set, then each cycle in $G$ contains an edge $e$ such that $e\in T$.
\end{lemma}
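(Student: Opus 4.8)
The plan is to argue by contradiction, mirroring the vertex-tracker arguments used earlier but at the level of edges. Suppose $T\subseteq E$ is a tracking edge set yet some cycle $C$ in the reduced graph $G$ satisfies $E(C)\cap T=\emptyset$. First I would locate $C$ inside the graph using Lemma~\ref{lemma:local-s-t}: applied to the subgraph on $V(C)$ it yields a local source $u$ and a local destination $v$ on $C$, together with a path $P_{su}$ from $s$ to $u$ and a path $P_{vt}$ from $v$ to $t$ that are mutually vertex disjoint and meet $C$ only at $u$ and at $v$ respectively. Reduction Rule~\ref{red:stpath-undirected} is what makes such connecting paths available; in the parallel-edge setting one may equally extract $u,v$ and the two connectors from any \stpath through an edge of $C$.

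Next I would split the cycle at $u$ and $v$ into its two internally vertex-disjoint arcs $Q_1$ and $Q_2$, each running from $u$ to $v$, and form the two walks $P_1=P_{su}\cdot Q_1\cdot P_{vt}$ and $P_2=P_{su}\cdot Q_2\cdot P_{vt}$. Because $P_{su}$ meets $V(C)$ only at $u$, $P_{vt}$ meets $V(C)$ only at $v$, and $P_{su},P_{vt}$ are mutually vertex disjoint, both $P_1$ and $P_2$ are simple \stpaths; and since $Q_1\neq Q_2$ (they are distinct arcs, and even two parallel edges count as distinct here), $P_1$ and $P_2$ are distinct.

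Finally I would compare their tracker sequences. Every edge that either path acquires from an arc lies on $C$, hence is not in $T$; so the sequence of tracker edges of $P_1$, and of $P_2$, is exactly the tracker edges of $P_{su}$ followed by those of $P_{vt}$, with nothing in between. As both paths share the identical prefix $P_{su}$ and identical suffix $P_{vt}$, they induce the same \emph{ordered} sequence of tracker edges, contradicting that $T$ is a tracking edge set, and hence establishing the lemma.

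The main obstacle I anticipate is the bookkeeping that guarantees $P_1$ and $P_2$ are genuinely simple and distinct while their tracker sequences coincide --- in particular, confirming that the arcs contribute no element of $T$ (immediate from $E(C)\cap T=\emptyset$) and that the shared prefix and suffix force the two sequences to agree in \emph{order}, not merely as sets. A secondary point to handle cleanly is the invocation of Lemma~\ref{lemma:local-s-t} in the edge-tracking model, where parallel edges are permitted; I would note that the lemma's conclusion depends only on reducedness and the vertex structure, so it transfers, and that the degenerate length-two cycle consisting of two parallel edges is covered by exactly the same argument.
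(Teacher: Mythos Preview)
Your proposal is correct and follows essentially the same approach as the paper: assume a cycle $C$ with no tracking edge, locate a local source $u$ and local destination $v$ on $C$, and concatenate the connecting paths with the two arcs of $C$ to produce two distinct \stpaths with identical tracker-edge sequences. The only cosmetic difference is that you invoke Lemma~\ref{lemma:local-s-t} to obtain $u,v$, whereas the paper extracts them as the first and last vertices of $C$ along an \stpath through some edge of $C$ (via Reduction Rule~\ref{red:stpath-undirected})---an alternative you explicitly mention as well.
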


\begin{figure}[ht]
\centering
\includegraphics[scale=0.5]{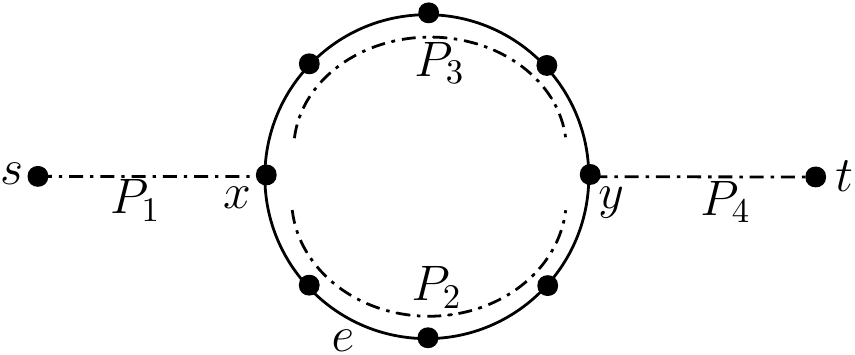} 
\caption{Cycle without any tracking edges} 
\label{fig:cycle-no-trackers}
\end{figure}

\begin{proof}
Suppose the claim does not hold. Then there exists a cycle $C$ in graph $G$, such that $E(C)\cap T=\emptyset$, i.e. none of the edges in $C$ belong to $T$. Consider an edge $e\in V(C)$. Due to Reduction Rule~\ref{red:stpath-undirected}, $e$ participates in an \stpath, say $P$. Let $u$ be the first vertex of $C$ that appears in path $P$ while traversing from $s$ to $t$. Similarly, let $v$ be the last vertex of $C$ that appears in path $P$ while traversing from $s$ to $t$. See Figure~\ref{fig:cycle-no-trackers}. Observe that $u$ and $v$ serve as local source and sink respectively for the cycle $C$, and there exist exactly two vertex disjoint paths between $u$ and $v$ in $C$. Since none of the edges in $C$ are part of the tracking edge set $T$, this leads to two \stpaths in $G$ with exactly same sequence of edges. This contradicts the fact that $T$ is a tracking edge set for $G$.
\qed
\end{proof}

Next we prove that a feedback edge set (FES) is a tracking edge set for a reduced graph. An FES is a set of edges whose removal makes the graph acyclic.

\begin{lemma}
\label{lemma:fas-is-tes}
For a reduced graph $G$, a feedback edge set $F$ is also a tracking edge set for $G$.
\end{lemma}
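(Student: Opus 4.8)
The plan is to argue by contradiction, complementing Lemma~\ref{lemma:each-cycle-needs-an-edge} which established the reverse direction that every tracking edge set must hit all cycles (hence is a feedback edge set). Here I would assume that $F$ is a feedback edge set but \emph{not} a tracking edge set, and derive a cycle that avoids $F$ entirely, contradicting that $F$ makes $G$ acyclic upon removal.

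Concretely, if $F$ fails to be a tracking edge set then there are two distinct \stpaths $P_1$ and $P_2$ whose edge sequences restricted to $F$ coincide; in particular the underlying \emph{sets} of tracking edges agree, i.e.\ $E(P_1)\cap F = E(P_2)\cap F$. The first key step is to observe that two distinct \stpaths always have distinct edge sets, since the edge set of a simple $s$-$t$ path determines the path; hence the symmetric difference $E(P_1)\triangle E(P_2)$ is nonempty. The second step is to note that this symmetric difference is an even subgraph of $G$ — regarding each path as a unit flow from $s$ to $t$, their difference is a circulation — and therefore contains at least one cycle $C$.

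The third step completes the contradiction. Since $F$ is a feedback edge set, $G-F$ is acyclic, so $C$ must contain some edge $e\in F$. But $e\in E(C)\subseteq E(P_1)\triangle E(P_2)$ means $e$ lies on exactly one of $P_1,P_2$, say $e\in E(P_1)\setminus E(P_2)$. Then $e\in E(P_1)\cap F$ while $e\notin E(P_2)\cap F$, contradicting $E(P_1)\cap F = E(P_2)\cap F$. Hence no such pair $P_1,P_2$ can exist, and $F$ is a tracking edge set for $G$.

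I expect the main obstacle to be the cleanest justification that the symmetric difference of two distinct \stpaths contains a cycle, and in particular handling the parallel edges permitted in this section: if $P_1$ and $P_2$ differ only by swapping an edge for a parallel copy, the relevant cycle is a length-two multigraph cycle, which the feedback-edge-set argument must still cover correctly. An alternative to the flow/circulation phrasing, closer in style to Lemma~\ref{lemma:not-trs-cycle}, would be to track the first vertex where $P_1$ and $P_2$ diverge together with the first vertex where they reconverge, producing an explicit cycle; but one would then have to verify that this cycle carries no edge of $F$, which is precisely where the equality of the $F$-sequences is invoked. For that reason I would prefer the symmetric-difference formulation, as it isolates this point most directly and sidesteps the case analysis of which diverging edges happen to lie in $F$.
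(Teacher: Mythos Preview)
Your proof is correct and follows the same contradiction strategy as the paper: assume $F$ is not a tracking edge set, obtain two \stpaths with the same $F$-trace, produce a cycle, and note that the feedback edge set must hit that cycle with an edge lying on exactly one of the two paths. The only difference is in how the cycle is produced: the paper takes the diverge/reconverge route you mention as your alternative (citing Figure~\ref{fig:paths-cycle}), whereas you extract the cycle from the symmetric difference $E(P_1)\triangle E(P_2)$. Your formulation is arguably cleaner, since it makes the ``$e$ lies on exactly one path'' step immediate and handles the parallel-edge case uniformly, while the paper leaves that observation implicit via the figure.
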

\begin{proof}
Consider graph $G=(V,E)$ reduced by Reduction Rule~\ref{red:stpath-undirected}, and an FES $F\subseteq E$ for $G$. We claim that $T=F$ is a tracking edge set for $G$. Suppose not. Then there exists two \stpaths, say $P_1$ and $P_2$, in $G$, such that the sequence of tracking edges in both these paths is the same. The graph induced by $P_1$ and $P_2$ contains at least one cycle, say $C$. See Figure~\ref{fig:paths-cycle}. Since $P_1$ and $P_2$ contain the same sequence of tracking edges, there must be no edge in cycle $C$ that belongs to $T$. However, since $T$ is an FES for $G$, it must necessarily contain an edge, say $e$,  from the cycle $C$ marked as a tracking edge. Observe that $e$ can belong to either $P_1$ and $P_2$, but not both of them. This contradicts the assumption that $P_1$ and $P_2$ contain the same sequence of tracking edges.
\qed
\end{proof}

Although finding a minimum FVS is a \textsc{NP}-hard problem, an FES can be found in polynomial time. We  now  prove that \tpe can be solved in polynomial time.

\begin{theorem}
\label{theorem:tracking-edge-set}
For an undirected edge-weighted graph $G$ on $n$ vertices, \tpe can be solved in $\Oh(n^2)$ time.
\end{theorem}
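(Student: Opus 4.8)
The plan is to establish Theorem~\ref{theorem:tracking-edge-set} by combining Lemma~\ref{lemma:fas-is-tes} with a matching lower bound and an efficient algorithm for computing a minimum-weight feedback edge set. The overall strategy is to show that for a reduced graph, a minimum-weight tracking edge set coincides with a minimum-weight feedback edge set, and that the latter can be computed in $\Oh(n^2)$ time.

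\textbf{Upper bound (a minimum-weight FES suffices).} First I would invoke Lemma~\ref{lemma:fas-is-tes}, which already establishes that any feedback edge set $F$ of the reduced graph $G$ is a tracking edge set. In particular, a minimum-weight FES is a valid tracking edge set, so the optimum tracking edge set weight is at most the optimum FES weight.

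\textbf{Lower bound (every tracking edge set is an FES).} The key matching direction is to argue that any tracking edge set must hit every cycle, i.e. it is a feedback edge set. This is exactly the content of Lemma~\ref{lemma:each-cycle-needs-an-edge}: for a reduced graph, every cycle $C$ contains at least one tracking edge. Hence removing a tracking edge set $T$ from $G$ destroys every cycle, so $T$ is itself an FES. Combined with the upper bound, this shows that the minimum-weight tracking edge set and the minimum-weight feedback edge set have the same weight, and that computing the latter solves \tpe.

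\textbf{Algorithm and running time.} It then remains to compute a minimum-weight FES in $\Oh(n^2)$ time. The standard fact is that the complement of a feedback edge set is a spanning forest, and that a minimum-weight FES is obtained by deleting the edges of a \emph{maximum}-weight spanning forest (equivalently, the minimum-weight FES consists of all edges not in a maximum-weight spanning tree of each connected component). So the procedure is: apply Reduction Rule~\ref{red:stpath-undirected} in quadratic time, compute a maximum-weight spanning forest, and output its complementary edge set as the tracking edge set. A maximum-weight spanning forest can be found in $\Oh(m + n\log n)$ time, which on a simple graph is $\Oh(n^2)$; together with the $\Oh(n^2)$ reduction step this yields an overall $\Oh(n^2)$ bound. \textbf{The main obstacle} I anticipate is being careful that parallel edges are permitted in this setting (as the problem statement allows), so the spanning-forest argument must be applied with multigraph conventions in mind; but since deleting a maximum-weight spanning forest from a multigraph still leaves precisely a minimum-weight set of cycle-breaking edges, the argument goes through unchanged, and the correctness follows directly from the two lemmas above.
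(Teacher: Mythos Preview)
Your proposal is correct and follows the same approach as the paper: identify a minimum-weight tracking edge set with a minimum-weight feedback edge set, and compute the latter as the complement of a maximum-weight spanning tree/forest in $\Oh(n^2)$ time. In fact your argument is slightly more complete than the paper's, since you explicitly invoke Lemma~\ref{lemma:each-cycle-needs-an-edge} to justify the lower bound (every tracking edge set is an FES) and mention applying Reduction Rule~\ref{red:stpath-undirected}, both of which the paper's short proof leaves implicit.
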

\begin{proof}
Let $G$ be an undirected edge-weighted graph on $n$ vertices.
From Lemma~\ref{lemma:fas-is-tes} it is known that an FES is a tracking edge set for $G$. In order to find a minimum weighted tracking edge set for $G$, we first find a maximum weight spanning tree $T$ for $G$ using Prim's algorithm or Kruskal's algorithm in $\Oh(n^2)$ time. Now the edges in $G-T$ comprise of a minimum weight FES, which is also a minimum weight tracking edge set for $G$.
\qed
\end{proof}

A path reconstruction algorithm similar to the one mentioned in Section~\ref{sec:path-recon} can be given by considering a sequence of tracking edges, and finding vertex disjoint paths between their endpoints in the graph obtained after removal of remaining tracking edges from the tracking edge set for that graph.

\section{Conclusions}
\label{sec:concl}

In this paper, we give polynomial time results for some variants of the \tp problem. Specifically, we solve \tp for chordal graphs and tournaments, along with giving an approximation algorithm for degree bounded graphs. We also analyze the problem \tpe, and prove it to be polynomial time solvable. A constructive algorithm has also been given that helps identify an \stpath, given the unique sequence of trackers it contains. Future scope of this work lies in improving the running times of these algorithms and identifying more graph classes where \tp may be easily solvable. Open problems include finding approximation algorithms for the \textsc{NP}-hard variants (other than bounded degree version) of the problem for both undirected and directed graphs.

\subsection*{Acknowledgement} We thank Prof. Venkatesh Raman for the insightful discussions and suggestions.

%
%
%
\bibliographystyle{splncs04}
\bibliography{tracking}

\end{document}